\newcommand{\remove}[1]{}
\newtheorem{lemma}{Lemma}[section]
\newtheorem{remark}{Remark}[section]
\newtheorem{proposition}{Proposition}
\def\E{{\mathbb{E}}}
\def\P{{\mathbb{P}}}
\algnewcommand{\algorithmicgoto}{\textbf{go to}}%
\algnewcommand{\Goto}[1]{\algorithmicgoto~\ref{#1}}
\begin{document}

	\title{ Slotted ALOHA and CSMA Protocols for \\FMCW Radar Networks}
	\author	
	{		
		\IEEEauthorblockN		
		{		
			Haritha~K\IEEEauthorrefmark{1},
			Vineeth Bala Sukumaran\IEEEauthorrefmark{2},
			Chandramani~Singh\IEEEauthorrefmark{1}
		}

		\IEEEauthorblockA		
		{
			\IEEEauthorrefmark{1} Indian Institute of Science, Bangalore, India \\ \{haritha, chandra\}@iisc.ac.in}
		\IEEEauthorrefmark{2} Indian Institute of Space Science and Technology, Trivandrum, India  \\ \{vineethbs\}@iist.ac.in}

\maketitle

\begin{abstract}
 We study medium access in FMCW radar networks. We assume that all the radars use the same parameters, e.g., chirp duration, chirp slope, cutoff frequency, number of chirps per packet, etc, and propose and analyze slotted ALOHA and CSMA protocols to mitigate narrowband interference. We define a notion of throughput to quantify the performance of the proposed protocols. In the case of ALOHA, we analyze interference probability and throughput as functions of the system parameters. We observe that interference probability and throughput may behave differently than in wireless communication networks. For instance, if the number of chirps per packet is larger than one, the interference probabilities may be smaller for higher transmission rates. We define a medium sensing procedure, referred to as clear channel assessment (CCA), as a part of the proposed CSMA, and also define CCA success and failure events. In CSMA, the radars transmit only after a successful CCA. We study, CCA success probability, interference probability, and throughput as functions of the system parameters. We observe that, unlike wireless communication networks, using the highest possible attempt rates may maximize throughput in a few network scenarios. We perform an extensive simulation to verify our analytical results and to compare slotted ALOHA and CSMA. We observe that CSMA outperforms ALOHA in all realistic scenarios. 
\end{abstract}

\section{Introduction}
Self-driving cars, also referred to as driverless or autonomous cars are poised to become a reality in the next five to ten years. To ensure the safety and efficiency of transportation systems, self-driving cars must have a $ 360  $-degrees view of their surrounding in a way that is ecologically and economically sustainable. These cars need various sensors such as optical cameras, millimeter-wave radars, and laser imaging detection and ranging to map the environment including cars, obstacles, and pedestrians surrounding them. 
\par Millimeter-wave Frequency Modulated Continuous Wave (FMCW) radars~\cite{TI_1} are long-range radars that can reliably detect distance, velocity, angle of other vehicles and objects in the vicinity self-driving cars. These radars operate in the $76$-$77$ GHz band~\cite{TI_2} with a bandwidth of $ 1 $ GHz. These can be made with compact antennas while retaining sufficiently high gain and narrow beamwidth to offer high range and angular resolution; they can offer localization sensitivity up to $3 $ cm. FMCW radar electronics are robust, and radar frequencies are generally immune to adverse weather conditions (snow/fog/rain or optical effects)~\cite{TI_2}. 

\par For reliable operation of FMCW radars, it is imperative that received radar signals be interference-free. Interference in $ 76$-$77 $ GHz band will be a crucial challenge in near future in view of the likely increase in: (a) the number of self-driving cars from multiple manufacturers, (b) the number of radars per car, and (c) the operating duty cycle per radar. There are two kinds of inter-radar interference \cite{Inter_Mit_Aydogdu}, \cite{Interference_Rao}. $ (1) $ Wideband interference: This is applicable to heterogeneous radar networks in which different radars use different chirp slopes.
This manifests itself as an increase in noise floor, leading to \textit{signal-to-noise-ratio} (SNR) degradation, and therefore misdetection of targets. $(2)$ Narrowband interference: this happens if a radar has another (interfering) radar using the same chirp slope in its vicinity and the beat signal frequency due to interfering radar is less than the cut-off frequency of the Low Pass Filter (LPF). This causes ghost detections (or, false alarms)~\cite{TI_1}. Both types of interference deteriorate radar performance, but mitigation of narrowband interference requires signal processing as well as medium access techniques \cite{Sun}. In this work, we focus on the mitigation of narrowband interference using medium access techniques. In this work, we focus on the mitigation of narrowband interference.
\par Medium access techniques have been widely studied in the context of wired and wireless communication networks. In communication networks, if a node receives signals from two simultaneously transmitting nodes, it cannot decode any of the signals; this phenomenon is termed as collision. A number of centralized~(e.g., TDMA) and distributed~(e.g., ALOHA, CSMA, CSMA-CA) multiple access protocols have been proposed to minimize interference in communication networks. In the context of FMCW radars, medium access control can reduce false alarms but may also lead to missed targets. Design and analysis of medium access control protocols is pivotal for large scale deployment of radar networks. 
\subsection{Related work}

 Jin et al.~\cite{vtc_radar} considered a network of FMCW radars with identical chirp parameters and obtained statistical characterizations of interference signal properties under an asynchronous pure random access scheme. A received chirp is classified as a false alarm or a real target based on its received power.
 Further, they considered continuous time setup and the radars chose transmission start time randomly and after that, they transmit chirps continuously. In our work, we consider slotted time structure and there is a random time duration between the successive packets.
 
\par In \cite{csma_radar} and \cite{Inter_Replica_Ammen} the authors use carrier sensing to mitigate the interference. Ishikawa et al.~\cite{csma_radar} considered collocated radars with identical parameters. The authors proposed a carrier sense multiple access (CSMA) algorithm to avoid the interference of FMCW radar signals by shifting the transmission timing after sensing the medium. But, they do not analyze the performance of the algorithm when the radars are non-collocated. We propose a variant of CSMA and discuss its performance in both collocated and non-collocated radar networks.
\par Ammen et al.~\cite{Inter_Replica_Ammen} propose a technique that generates an interference replica of the ghost signal during carrier sensing. This is subtracted from the received signal in the frequency spectrum to mitigate narrowband interference.
\par To mitigate narrowband interference different techniques, e.g., TDM, FDM, phase coding, frequency hopping, and chirp sequence are used to orthogonalize the chirp transmissions in \cite{radar_chat}, \cite{Jin_Journal}, \cite{FH_Luo}, \cite{Interference_Rao}, \cite{radar_mac}, \cite{Centra_Mazher}, and \cite{Son}.
\par Centralized medium access mechanisms for radars were proposed in \cite{radar_mac} and \cite{Centra_Mazher}. In \cite{radar_mac}, a control center was setup to receive speed and location information from all the nearby radars. Using these the controller computes the waveform parameters that provide orthogonal access to the medium and dispatches them to each one to avoid interference. In \cite{Centra_Mazher}, the authors proposed a resource allocation strategy that relies on the existing communication infrastructure to jointly
assign slope directions and carrier frequency offset to all the radar units operating in its vicinity.
\par Aydogdu et al.~\cite{radar_chat} proposed a medium access control protocol RadChat. It is a coordinated framework among the nearby vehicles and uses time division and frequency division techniques to mitigate interference caused by neighboring radars. The schemes discussed in \cite{radar_chat} and \cite{radar_mac} use time synchronous orthogonal techniques to avoid interference. The challenge in these time synchronous schemes is propagation delay~($\approx$us) can deteriorate the efficacy of orthogonality.
\par Son et al.~\cite{Son} proposed an algorithm where they assign a chirp sequence for FMCW radar. They design a chirp sequence set such that the slope of each vehicle's chirp sequence will not overlap within the set. By assigning one of the chirp sequences to each vehicle, interference can be mitigated.
\par Jin et al.~\cite{Jin_Journal} proposed asynchronous non-cooperative protocols to mitigate interference in radars. They quantify the network performance using cross layer performance metrics such as multiple access capacity and target misdetection probability. They explore different combination of frequency hopping, phase coding, and random frequency division multiple access techniques to avoid interference.
\par Luo et al.~\cite{FH_Luo} proposed a frequency-hopping random chirp technique that reconfigures the chirp sweep frequency and time in every cycle. This results in a noise-like frequency response after the received signal is demodulated at the receiver. Rao et al.~\cite{Interference_Rao} used clock drift that changes the slope of interfering radar with respect to the tagged radar. Further, they employ binary phase encoding to mitigate interference.
\par Al-Hourani et al.~\cite{stoch_geometry} considered a Poisson point process and a Bernoulli lattice process as models for the spatial distribution of radars. Using stochastic geometry they compute interference statistics and obtain analytical expressions for the probability of successful range estimation.

\subsection{Our Contribution}
We consider slotted ALOHA protocol, wherein the radar transmits for a fixed amount of time and enters into backoff state. We propose CSMA, in which each radar senses the medium before transmitting the packet. 
\begin{itemize}
	\item We derive the probability of interference in FMCW radar networks assuming that all the radars are synchronized and collocated in ALOHA. 
	
	\item We also derive the probability of interference in the scenario where the radars are not collocated and their clocks are not synchronized, i.e., each pair of radars is associated with a non-zero propagation time in ALOHA.
	 
	\item In both the cases, as expected, the interference probability increases with medium access probability if the number of chirps per packet is one. However, if packets consist of more than one chirp, the interference probability first increases and then decreases with medium access probability.
	
	\item We define a notion of throughput and derive throughput for both of the above scenarios. If each packet consists of exactly one chirp, the throughput first increases and then decreases with medium access probability. In this case, we derive the expression for the optimal medium access probability. When the packets consist of more than one chirp the throughput first increases and reaches the maximum, then decreases and increases with medium access probability.	
	
	\item We propose a CSMA protocol that senses the medium before transmitting a packet. we define a medium sensing procedure called clear channel assessment (CCA) and CCA success and failure events. Further, we analyze the CSMA protocol using fixed point analysis for a restricted class of radar networks.
	 
	\item We perform extensive simulation to verify our analytical results and to compare slotted ALOHA and CSMA.
\end{itemize}
\paragraph*{Organization of the paper}
We explain the operation of a FMCW radar in Section~\ref{subsec:fmcw radars} and introduce the system model in Section~\ref{subsec:Radar Network}. In 
Section~\ref{sec:Performance_analysis} we analyze slotted ALOHA and derive the probability of interference and throughput. In Section~\ref{sec:CSMA Algo} we propose CSMA protocol. In Sections~\ref{subsec:collocated-Async} and \ref{subsec:csma-non-coll} we analyze CSMA in collocated and non-collocated radar networks respectively. In Section~\ref{sec:numerical_res} we verify our analytical results obtained and compare the performance of slotted ALOHA and CSMA protocols.
   \section{System Model}
\subsection{FMCW Radars}\label{subsec:fmcw radars}
In this section we explain the operation of a FMCW radar and narrowband interference using radar transmit/receive block Figure~\ref{fig:block-diagram} \cite{vtc_radar}. The FMCW radar is used to measure the range, velocity, and angle of arrival of a target in front of it. The radar generates a sinusoid signal called chirp whose frequency increases linearly with time. The chirp is characterized by following parameters, starting frequency~$(f_{\min})$, ending frequency~$(f_{\max})$, bandwidth~$(B)$, chirp duration $(T_c)$, and slope~$(h=\frac{B}{T_c})$ as shown in Figure~\ref{fig:chirp}.
\par To explain the operation of FMCW radar we consider a single target in front of the radar at distance $d$. The MAC scheduler schedules the chirps transmissions and the synthesizer generates the chirps at these scheduled times. The transmitting antenna transmits them. The reflected chirp from the target is received by receiving antenna with a delay $\tau = \frac{2d}{c}$. The reflected and transmitted chirps are given as inputs to a mixer. The mixer is a device that takes two sinusoids as inputs and generates another sinusoid whose instantaneous frequency is equal to the absolute difference of instantaneous frequencies of the input signals. Note that the reflected chirp is delayed by $h\tau$ from the transmitter chirp as shown Figure~\ref{fig:chirp}. The target at distance $d$ generates an IF signal with frequency $\frac{2dh}{c}$ at the output of the mixer. This is further passed through a LPF with cut-off frequency $f_H$. The cut-off frequency is a function of the sampling frequency of the ADC. The IF signal is further digitized using ADC and converted into frequency domain using DFT. The detector detects a peak at frequency $\frac{2dh}{c}$. Note that the maximum frequency that can be observed is limited by the cut-off frequency of the LPF. This leads to,
\begin{align}\label{eq:d_max,Delta}
h\tau &< f_H \nonumber \\
\tau < \frac{f_H}{h}&\text{   and   }d<\frac{cf_H}{2h}.
\end{align}

\paragraph*{Target range $(d_{\max})$} The maximum distance between the radar and the target such that the corresponding IF signal is passed through the LPF. This is given by $\frac{cf_H}{2h}$.
\paragraph*{Slot $(\Delta)$} The maximum round trip delay between the radar and the target such that the corresponding IF signal is passed through LPF and is given by $f_H/h$.

\begin{figure}[H]
	\centering
	\includegraphics[width=8cm, height=2.75cm]{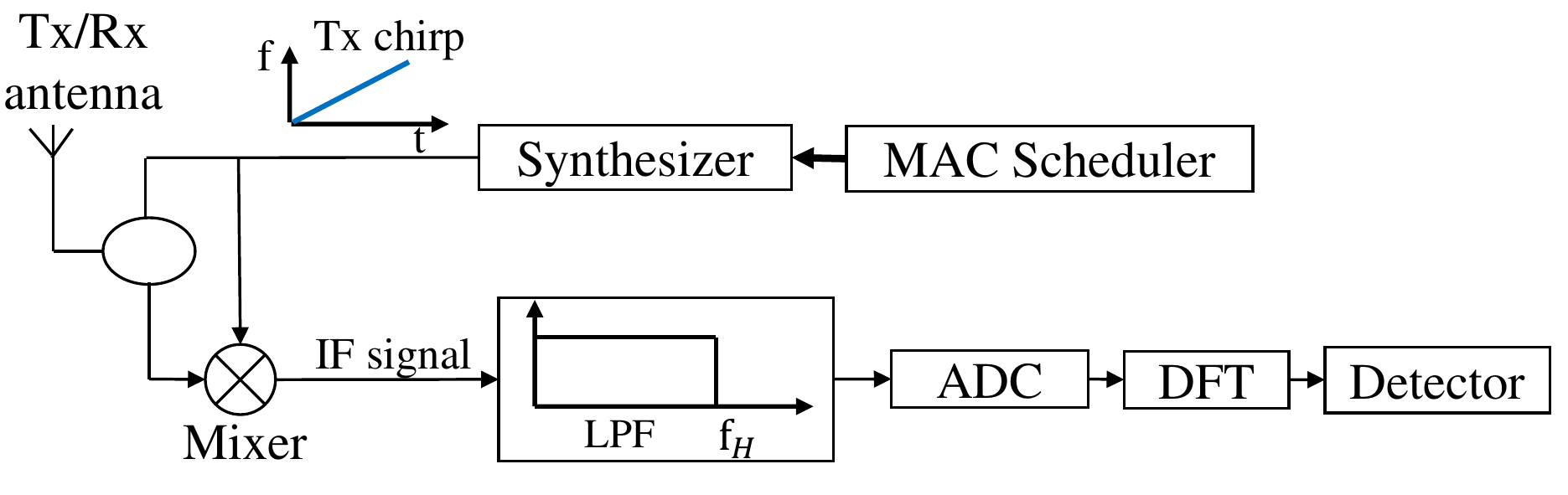}
	\caption{TX/RX block diagram for a FMCW radar}
	\label{fig:block-diagram}
\end{figure}
\begin{figure}[H]
	\centering
	\includegraphics[width=4cm, height=3cm]{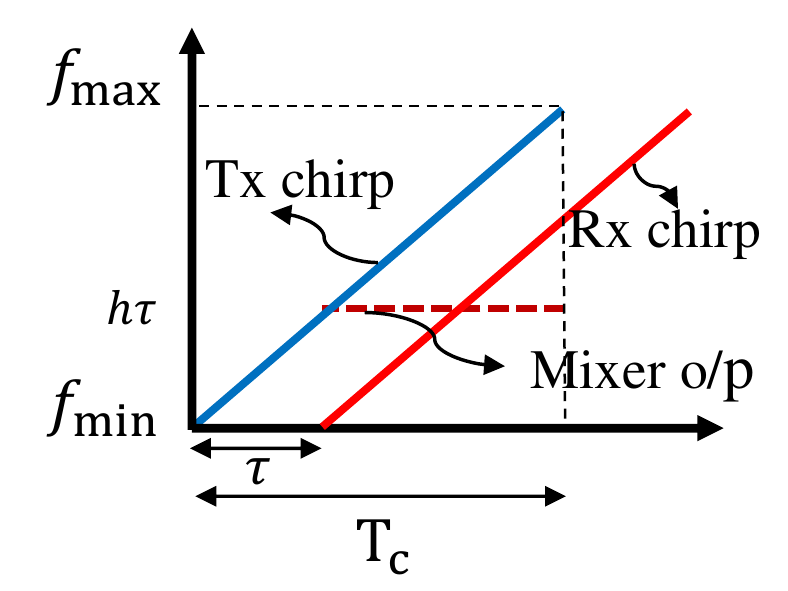}
	\caption{Illustration of the transmitted chirp, received chirp, and the IF signal obtained by mixing
	}
	\label{fig:chirp}
\end{figure}

\paragraph*{Multiple targets in front of radar} Now consider the scenario where there are two targets in front of the radar with in the target range and we want to compute their distance from the radar. Let $d_1, d_2$ be the distances from the radar. Ideally we should be able to observe two peaks in the frequency spectrum at $\frac{2hd_1}{c}, \frac{2hd_2}{c}$ corresponding to the targets at distance $d_1, d_2$ respectively. The maximum amount of time available to observe the reflected chirps is $T_c$ (see Figure~\ref{fig:chirp}). This observation window $T_c$ can distinguish frequency components that are separated by more than $\frac{1}{T_c}$Hz \cite{TI_1}. So, to observe two peaks corresponding to the distances $d_1$ and $ d_2 $ in the frequency spectrum,
\begin{align}\label{eq:range_resolution}
\frac{2hd_1}{c}-\frac{2hd_2}{c} &> \frac{1}{T_c} \nonumber \\
d_1-d_2 > \frac{c}{2B}
\end{align}
where $B$ is the bandwidth of the chirp. 
\paragraph*{Range Resolution $(\Delta d)$}
Range resolution is the minimum distance between the two targets such that the radar can distinguish them and it is given by $\Delta d = \frac{c}{2B}$.
\begin{remark}
If the distance between two targets is less than the range resolution then we observe only a single peak in the DFT spectrum and radar detects them as a single target.
\end{remark}

\paragraph*{False alarm or interference}\label{para:Interference}
Now consider two FMCW radars with identical parameters and separated by a distance $d$. Let us label these as first radar and second radar respectively. Each radar receives chirps transmitted by the other radar as shown in Figure~\ref{fig:Interference_1}. 
\par Let $t_1, t_2$ be the start times of chirps at the first radar and the second radar respectively. Then at time $t_3=t_2+\frac{d}{c}$ the first radar receives the chirp transmitted by the second radar. If $\tau = |t_1-t_3|< \Delta $ (see Figure~\ref{fig:Interference_1}a) a signal at frequency $f_{IF} = h\tau$ is observed at the output of LPF. Note that the first radar assumes the chirp received from the second radar as reflected version of its own chirp and computes the distance. The first radar assumes that there is a target at distance $d_g=\frac{cf_{IF}}{2h}$. We call this a ghost target because there is no real target at this distance. The phenomenon of a radar detecting a ghost target is called a \textit{false alarm}. We also call this \textit{interference}. Note that the first radar does not observe a false alarm for every chirp received from the second radar. The following time condition has to be satisfied to observe a false alarm at the first radar. 
\begin{align}\label{eq:time_condition}
t_1-\Delta < t_2+\frac{d}{c} < t_1+\Delta.
\end{align}
We define the \textit{Interference range} $(d_{I,\max})$ as the maximum distance between two radars such that the interference is observed. It is a function of chirp power transmitted by the second radar. Any radar outside the interference range contributes no interference. 
\begin{figure}	
	\centering
	\includegraphics[width=4cm, height=4cm]{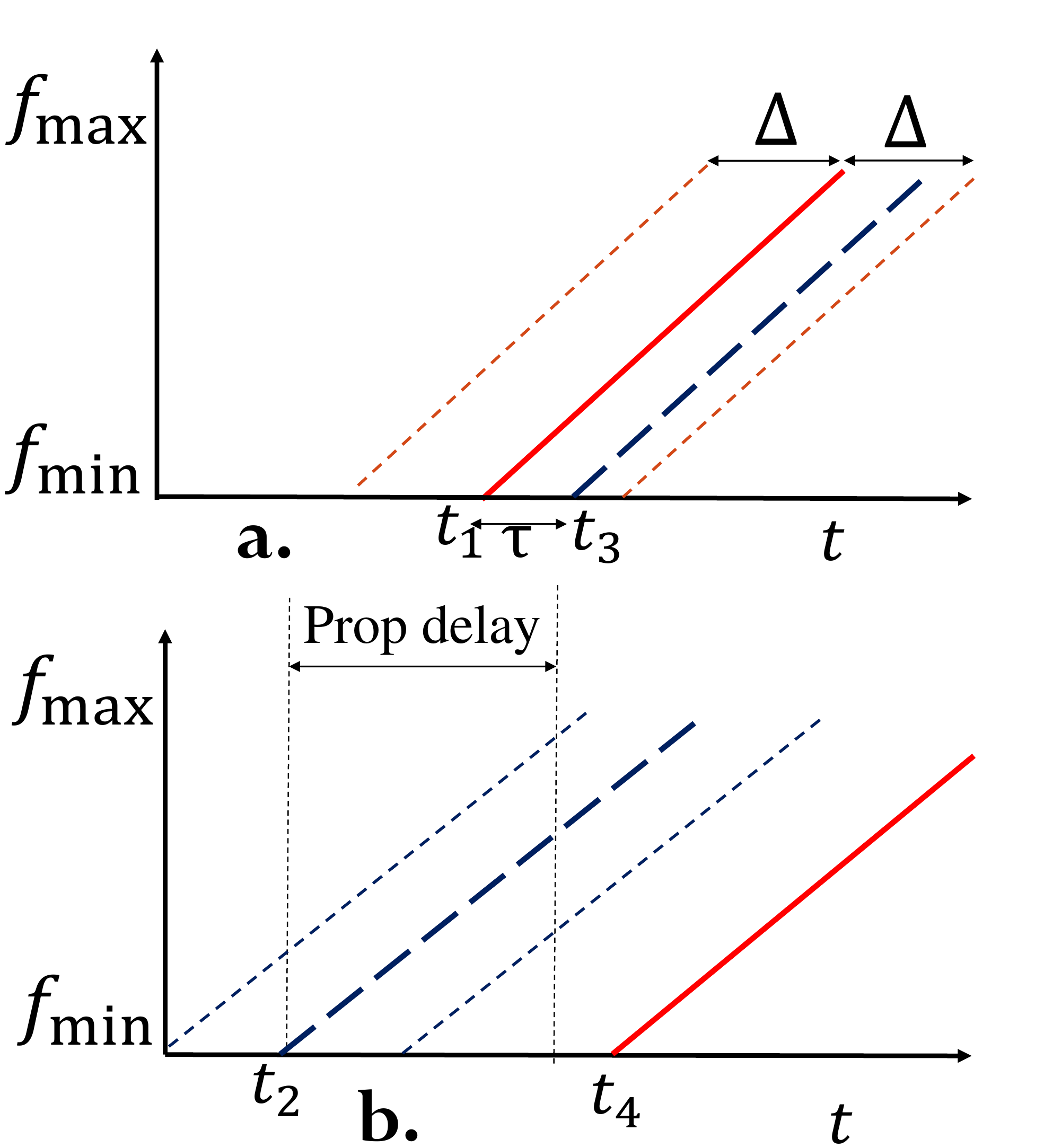}
	\caption{Illustration of a "ghost target" being detected at radar $ 1 $. In (a) and (b) we illustrate the transmitted and received chirps of radars $ 1 $ and $ 2 $. If a radar (radar $ 1 $) receives a transmission within $(t_1-\Delta, t_1+\Delta)$ then we have interference and a ghost target is detected.}	
	\label{fig:Interference_1}
\end{figure}

  \begin{table}
  \begin{center}
			
			\begin{tabular}{||l | l||}			
				\hline \rule{0pt}{9pt}
				Symbol & Description  \\ 
				\hline\hline\rule{0pt}{9pt}
				$f_{\min}$ & Chirp starting frequency\\
				\hline \rule{0pt}{9pt}
				$f_{\max}$ & Chirp ending frequency\\
				\hline \rule{0pt}{9pt}
				$B$ & Chirp bandwidth\\
				\hline \rule{0pt}{9pt}
				$h$ & Chirp slope\\
				\hline \rule{0pt}{9pt}
				$T_c$ & Chirp duration in seconds\\
				\hline \rule{0pt}{9pt}
				$ M $ & Number of radars  \\
				\hline \rule{0pt}{9pt}
				$ K $ & Chirp duration in number of slots  \\
				\hline \rule{0pt}{9pt}
				$ L $ & Number of chirps per packet \\
				\hline \rule{0pt}{9pt}
				$ W $ & Number of backoff slots  \\
				\hline				
			\end{tabular}
		
		 \caption{ Notations and symbols used}
		\label{tab:Table 1}  
	\end{center}
		\end{table}			

\subsection{FMCW Radar Networks} \label{subsec:Radar Network}
We consider a network of $M$ radars with identical chirp parameters. The radars operate using a slotted time structure with slot size~$\Delta$. We assume that the chirp duration is an integer multiple of slot size, i.e., $T_c=K \Delta,$ where $K$ is a positive integer. Further, we define the clock offset $\delta_{k,l}$ of radar $ l $ with respect to radar $ k $ as the time duration between a radar $k$'s clock tick and a radar~$l$'s clock tick that comes immediately after radar~$k$'s clock. Note that $ \delta_{k,l} \in [0, \Delta)$. We call a network \textit{synchronous} if $\delta_{k,l} =0, \forall k,l \in \{1,2,\dots M\}$ else we call it \textit{asynchronous}. As described in Section~\ref{para:Interference}, radars' transmissions interfere with each other. The radars employ a MAC protocol to schedule their transmissions for the possible reduction of interference. In general, in a MAC protocol, each radar interleaves transmissions with silence periods, termed as backoffs. The backoff durations are random whereas a fixed number of, say $L$, chirps are contiguously transmitted before going for the next backoff. We refer to the contiguous sequence of $ L $ chirps as a packet. 
\par We introduce two MAC protocols, namely slotted ALOHA and CSMA, in Sections~\ref{sec:Performance_analysis} and \ref{sec:CSMA Algo}, respectively.  
Here, we define a metric, called throughput, to quantify the performance of a MAC protocol. It is the fraction of time slots spent by all the radars in transmitting packets successfully without interference.
\paragraph*{\textbf{Throughput $(\Theta)$}}\label{para:Theta}
Let $S(T)$ be the total number of packets transmitted successfully by all the radars in a network in $T$ slots. Then the throughput of the network is given by,
\begin{align}
\Theta = \lim_{T\rightarrow \infty} KL \frac{S(T)}{T},
\end{align}
where $K$ is the chirp length in slots and $L$ is the number of chirps per packet.
\paragraph*{Effect of backoff on target detection}
Let us define a backoff and the subsequent packet transmission as a transmission cycle~(see Figure~\ref{fig:Tx_Cycle}). The length of a transmission cycle is typically much shorter than $d_{\max}/v_{\max}$ where $v_{\max}$ is the maximum relative speed of the target; it is the time derivative of the distance between the radar and the target. For instance, consider the following example with typical radar parameters and vehicle speeds \cite{csma_radar}. The slot~$\Delta=\frac{4}{3}\text{x}10^{-6}$~s and $d_{\max}= 200$m. Let $W$ be the number of backoff slots. The change in distance between target and radar in this duration is $\Delta W v_{\max} $ meters. We consider typical values, $v_{\max}=50$m/s ($180$kmph), $W=100$. The distance traveled by the target in $W$ slots is $0.67$cm. We have used typical values for radar parameters and vehicle speeds in this calculation, the actual values may differ but would be of the same order. Consider a target just entered into the radar range when the radar is in backoff state. Let $d$ be the distance between them. The change in $d$ before and after backoff is at most in the order of centimeters and very small compared to $d_{\max}\approx 200$m. So, it takes several transmission cycles for a target to come alarmingly close to the radar after it enters the transmission range of the radar. Therefore, the radars are unlikely to miss the targets because of backoffs.
\par We tabulate the symbols and notations used in TABLE~\ref{tab:Table 1}. Further, we use $\E[.]$ for expectation and Geo~$(p)$ for geometric distribution with parameter $p$.
\section{Slotted ALOHA} \label{sec:Performance_analysis}
\paragraph*{Slotted ALOHA}
We consider a slotted ALOHA protocol wherein each radar transmits $L$ chirps contiguously and enters into backoff for $W$ slots (see Figure~\ref{fig:Tx_Cycle}) and repeats the same. 
Successive backoffs constitute a sequence of independent and identically distributed~(i.i.d) random variables, each having cumulative distribution function $F$. Let $W$ denote a generic random variable with distribution $F$. We assume that $\E[W]$ is finite. 
\begin{figure}	
	\centering
	\includegraphics[width=6.5cm, height=1.8cm]{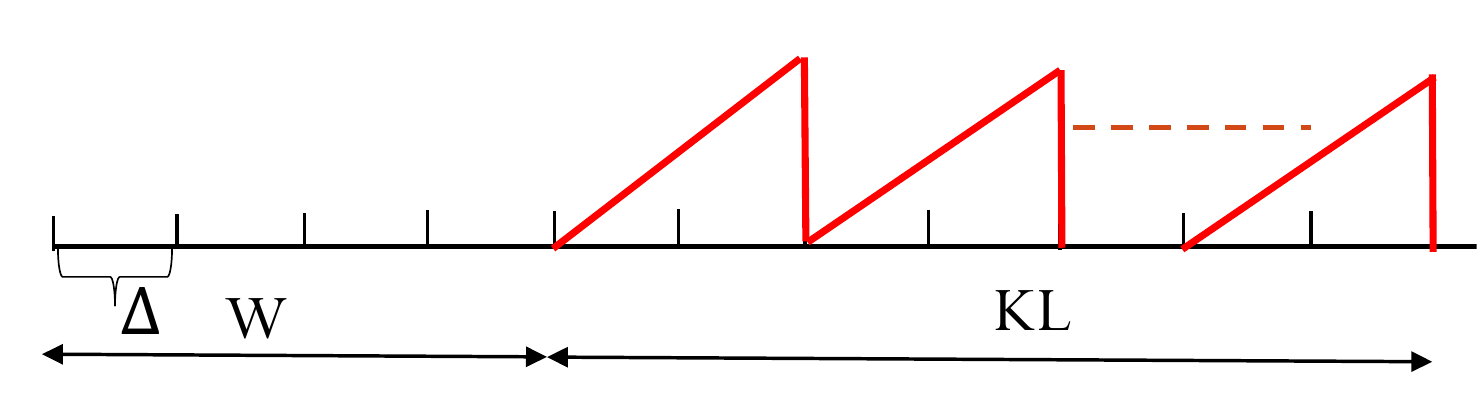}
	\caption{Transmission cycle}
	\label{fig:Tx_Cycle}
\end{figure}
In ALOHA each radar transmits packets irrespective of the other radars' transmissions in the network. So, each radar's packets can be interfered by other radars' packets. 
\par In this section, we analytically characterize the throughput of slotted ALOHA using the probability of interference. We assume all radars in the network have identical chirp parameters. Further, the clock offset of radar~$l$ with respect to radar~$k$ is given by $\delta_{k,l}$ and the distance between them is given by $d_{k,l}$. We compute the probability of interference in the following network scenarios.
\begin{enumerate}
	\item We assume that all the $M$ radars are collocated, i.e., we ignore the propagation delays and that their clocks are synchronized i.e., $\delta_{k,l}=0, \text{ for all }k,l$. We use the insights from this scenario to analyze the second scenario, described below.
	
	\item We assume that all the $M$ radars are non-collocated and each radar $k, k \in \{1,\dots,M\}$ can potentially interfere with all the other radars $l, l\in\{1,\dots , M\}\setminus k$ in the network. Further, we assume clocks are not synchronized.
\end{enumerate}
\par In order to derive the probability of interference, we define the state of a radar in a slot as follows. We say that a radar is in state $0$ if it is in backoff. Further, if a radar is transmitting $i$-th slot in the packet then we say that it is in state $i, i \in \{1, \dots KL\} $. Recall that in ALOHA protocol each radar transmits packets irrespective of other radars' packet transmissions. So, the state evolution of each radar is independent of other radars. Let us define $\pi = (\pi_0, \pi_1, \pi_2, \dots \pi_{KL})$ where $\pi_i, i \in \{0,1, \dots KL\}$ is long term fraction of slots the radar spends in state $i$. We evaluate $\pi$ in the following and use it in the Propositions~\ref{prop:Prop_1}, \ref{prop:Prop_2} and \ref{prop:theta} to compute the probability of interference and throughput.

\par The radar's state evolves according to a renewal process with the slots in which the radar state is $1$ being the renewal epochs. The duration of a transmission cycle $C_i, i \in \{1,2,\dots\}$ is random and independent across $i$. Further, $\mathbb{E}[C_i] = KL+\mathbb{E}[W]$. With this we define renewal epochs at a radar 
\[Z_j =\sum_{i=1}^j C_i, \text{ and }Z_0 = 0\]
and renewal process \[\phi(T) = {\sup\{j \geq 0:Z_j\leq T\}},\] where $\phi(T)$ denotes the number of radar transmission cycles occurred until time slot $T$. Let $R_i$ be the reward associated with $i$-th transmission cycle. The total reward earned in $\phi(T)$ cycles is given by
\[R(T) = \sum_{i=1}^{\phi(T)} R_i(T).\]
We define this reward as the amount of time the radar has spent in transmitting chirps i.e., $R_i = KL, i \in \{1,2, \dots \phi(T)\}$.
From the Elementary Renewal theorem and Renewal Reward theorem~\cite{AK}
\begin{align}\label{eq:RRT}
\lim_{T\rightarrow \infty}&\frac{\phi(T)}{T} = \frac{1}{KL+\mathbb{E}[W]}\text{  almost surely}, \nonumber\\
\lim_{T\rightarrow \infty}&\frac{R(T)}{T} = \frac{KL}{KL+\mathbb{E}[W]}\text{  almost surely }.
\end{align}
We defined
\[\pi_i = \lim_{T\rightarrow \infty} \frac{\text{The amount time spent in state } i \text{ until slot } T}{T}
,\] using \eqref{eq:RRT} in above we write,
\begin{align}\label{eq:pi}
\pi_0 &=\frac{\mathbb{E}[W]}{KL+\mathbb{E}[W]},\nonumber \\
\pi_1 &= \pi_2 =\dots = \pi_{KL} = \frac{1}{KL+\mathbb{E}[W]}.
\end{align}

Now we consider the scenarios described at the beginning of this section to compute interference probability. First, we compute the probability of interference in two radar network and using this result we compute in $M$ radar network. In two radar network we label the radar at which we are observing as tagged radar and the other one as interfering radar. Let $d$ and $\delta$ be the distance between the two radars and clock offset of interfering radar with respect to the tagged radar respectively. We first consider synchronous and collocated radar network. We consider asynchronous and non-collocated radar network subsequently.
\subsection{The case $\delta = 0, d = 0$}\label{subsec:1_prop}
In this case the time condition~\eqref{eq:time_condition} for interference at the tagged radar in the slotted system reduces to 
\[t_1-1 < t_2 <t_1+1, \]
where $ t_1,t_2 $ indicate chirp start time slot indices at the tagged and interfering radars respectively. As $t_1, t_2 $ are integers the above condition gives $t_1=t_2$, i.e., if two chirps at the tagged and interfering radars start at the same slot then tagged radar suffers interference. Recall that~(see Figure~\ref{fig:Tx_Cycle}) each radar transmits $L$-chirps contiguously in one transmission cycle. So, the interference can be caused by any one of the $L$-chirps from interfering radar to any one of the $L$-chirps at the tagged radar as shown in~Figure~\ref{fig:1_Prop}. Further, we say that a packet is successfully transmitted if none of its chirps are interfered by the packets from other radars. 
\par Without loss of generality let us assume a packet transmission started at the tagged radar at slot $t_1$. Let us define following events at the interfering radar $\{E_{-(L-1)}, \dots E_0, E_1 \dots E_{L-1} \}$ where event $E_i, i \in \{-(L-1), \dots L-1\}$ is starting packet transmission at slot $t_1+iL$. Further, observe that these are the only events that can cause interference at the tagged radar. 
\par This is explained for the case $L=2$ in~Figure~\ref{fig:1_Prop}. The interfering events possible in are $ \{E_{-1},E_0,E_1\}$. We have shown events $ E_{-1},E_1 $ starting packet transmission at slots $t_1-2, t_1+2$ respectively. In the case of $E_{-1}$ first and second chirps from the tagged and interfering radars started at the same slot. Similarly, in $E_1$, second and first chirps from tagged and interfering radars started at the same slot. 
\begin{figure}	
	\centering
	\includegraphics[width=8cm, height=4.5cm]{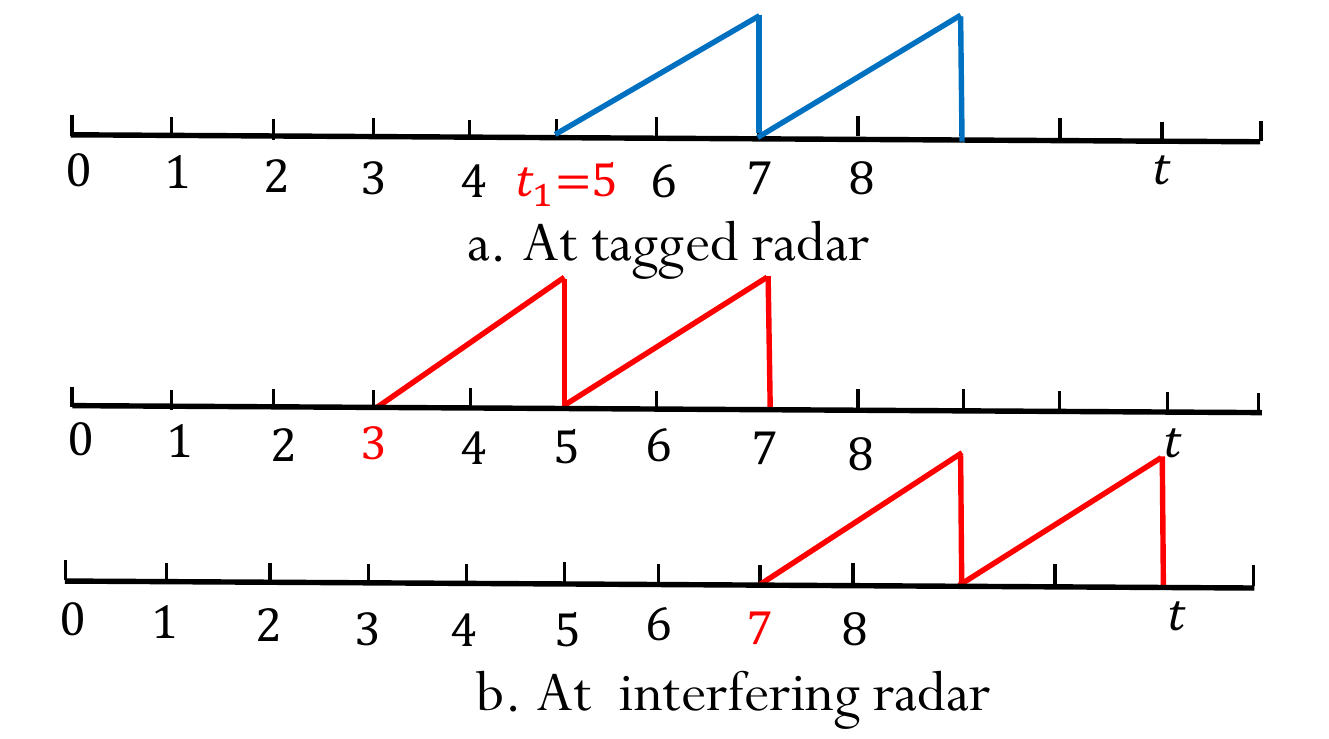}
	\caption{Illustrating the interference events $E_{-1}, E_1$ for $K=2,L=2$, a. The tagged radar transmitting a packet, b. Interfering events at the interfering radar }
	\label{fig:1_Prop}
\end{figure}
\par Given the tagged radar started transmission cycle at slot $t_1$ the only events that can cause interference are $\{E_{-(L-1)}, \dots E_0, E_1 \dots E_{L-1} \}$. From \eqref{eq:pi} we have the probability of starting packet transmission in any slot is $\P(E_i) = \frac{1}{KL+\mathbb{E}[W]}, i \in \{-(L-1)\dots L-1\}$. Using this we define the probability of interference at the tagged radar as
\begin{align*}
p_{I2}=\mathbb{P}\Big(\cup_{i=-(L-1)}^{L-1}E_i\Big).
\end{align*}

\par In Proposition~\ref{prop:Prop_1}, we derive the probability of interference at the tagged radar in two radar network and extend that to the case of a $ M $-radar network.
\begin{proposition}\label{prop:Prop_1}	
\begin{enumerate}
	\item The interference probability at the tagged radar in two radar network is given by,
		\begin{align*}
		p_{I2} = &\frac{2L-1}{KL+\mathbb{E}[W]}\\
		&-\frac{1}{KL+\mathbb{E}[W]}\sum_{i=1}^{L-1}(L-i)\mathbb{P}(W=(i-1)K).
		\end{align*}
	\item The interference probability at the tagged radar in $M$ radar network is given by,
		\begin{align*}
		p_{I} = 1-(1-p_{I2})^{M-1}.
		\end{align*}
\end{enumerate}
	
\end{proposition}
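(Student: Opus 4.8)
\emph{Proof proposal.} The plan is to compute $p_{I2}=\mathbb{P}\big(\cup_{i=-(L-1)}^{L-1}E_i\big)$ by inclusion--exclusion, show that this expansion truncates after the pairwise terms, evaluate the first two orders using the stationary probabilities \eqref{eq:pi} together with the renewal structure of a single radar's transmission process, and then derive part (2) from the mutual independence of the radars under ALOHA.

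For the first-order term, since the tagged and the interfering radars evolve independently, conditioning on the tagged radar starting a packet in slot $t_1$ leaves the law of the interfering radar unchanged, so $\mathbb{P}(E_i)=\pi_1=\frac{1}{KL+\mathbb{E}[W]}$ for every $i\in\{-(L-1),\dots,L-1\}$, giving $\sum_i\mathbb{P}(E_i)=\frac{2L-1}{KL+\mathbb{E}[W]}$. Next I would argue that at most two of the events $E_i$ can occur simultaneously: their associated packet-start slots $t_1+iK$ all lie within a window of $2(L-1)K$ slots, whereas two successive packet starts of the interfering radar are separated by at least $KL$ slots (a cycle of $KL$ chirp slots plus a nonnegative backoff), and $2(L-1)K<2KL$; hence no three packet starts can fall in the window, so every inclusion--exclusion term of order $\geq 3$ vanishes.

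For the pairwise terms, if $E_i$ and $E_{i'}$ both occur with $i<i'$ then the two packet starts must be \emph{consecutive} — a start strictly between them would be a third start in the window, which we have just excluded — so the intervening cycle has length $(i'-i)K$; since a cycle consists of $KL$ chirp slots followed by a backoff $W$, this forces $W=(i'-i-L)K$ and hence $i'-i\in\{L,L+1,\dots,2L-2\}$. Because the backoffs are i.i.d.\ and independent of the past, $\mathbb{P}(E_i\cap E_{i'})=\mathbb{P}(E_i)\,\mathbb{P}\big(W=(i'-i-L)K\big)=\pi_1\,\mathbb{P}\big(W=(i'-i-L)K\big)$. For each gap $j=i'-i$ there are $2L-1-j$ admissible pairs; summing over $j\in\{L,\dots,2L-2\}$ and re-indexing by $i=j-L+1$ turns $\sum_{i<i'}\mathbb{P}(E_i\cap E_{i'})$ into $\frac{1}{KL+\mathbb{E}[W]}\sum_{i=1}^{L-1}(L-i)\,\mathbb{P}(W=(i-1)K)$, and subtracting this from the first-order term yields the formula in part (1) (the sum being empty, and the answer reducing to $\pi_1$, when $L=1$, as it must).

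For part (2), condition on a packet of the tagged radar: whether a given other radar $k$ interferes with it depends only on radar $k$'s transmission process, so by the symmetry and mutual independence of the radars these $M-1$ events are i.i.d.\ with probability $p_{I2}$ each; the packet is therefore interference-free with probability $(1-p_{I2})^{M-1}$, whence $p_I=1-(1-p_{I2})^{M-1}$. The main obstacle I anticipate is the combinatorial bookkeeping of the second-order terms — specifically, establishing that the inclusion--exclusion series truncates at order two and that the backoff following a packet start is conditionally independent of the event $E_i$ — rather than any deeper analytic difficulty; once these structural facts are in place, the remainder is routine algebra.
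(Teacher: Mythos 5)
Your proposal is correct and follows essentially the same route as the paper's proof: both reduce $\mathbb{P}\big(\cup_i E_i\big)$ to first- and second-order terms only (the paper via successive set differences, exploiting that $\{E_{-(L-1)},\dots,E_0\}$ and $\{E_0,\dots,E_{L-1}\}$ are each mutually exclusive; you via inclusion--exclusion truncating at order two because three packet starts cannot fit in the $2(L-1)K$-slot window), evaluate $\mathbb{P}(E_i)=\pi_1$ and the pairwise intersections through the single-backoff condition $W=(i'-i-L)K$, and deduce part (2) from the independence of the radars. Your re-indexing by the gap $j=i'-i$ yields exactly the paper's sum $\sum_{i=1}^{L-1}(L-i)\mathbb{P}(W=(i-1)K)$, so the two arguments are the same in substance.
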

\begin{IEEEproof}	
We observe that $\{E_{-(L-1)}\dots E_0\}$ are mutually exclusive and $\{E_0, \dots E_{(L-1)} \}$ are mutually exclusive. Further, $E_i, i \in \{1,\dots L-1\} $ and $E_{-(L-j)},  j \in \{1, \dots i\}$ are not mutually exclusive.
Now consider,
\begin{align}\label{eq:prob_inter_1}
p_{I2}=\mathbb{P}&\big(\cup_{i=-(L-1)}^{L-1}E_i\big) \nonumber \\ 
	= &\sum_{i=-(L-1)}^{L-1}\mathbb{P}\big(E_i\setminus  (\cup_{j=-(L-1)}^{i-1}E_j)\big)\nonumber
	\\ 
	\stackrel{(a)}{=}&\sum_{i=-(L-1)}^0\mathbb{P}(E_i)+\sum_{i=1}^{L-1}\mathbb{P}\big(E_i\setminus  (\cup_{j=-(L-1)}^{i-1}E_j)\big) \nonumber\\
	=&\sum_{i=-(L-1)}^{(L-1)}\mathbb{P}(E_i)-\mathbb{P}\Big(E_i\cap(\cup_{j=-(L-1)}^{i-1}E_j)\Big) \nonumber\\
	\stackrel{(b)}{=}&\sum_{i=-(L-1)}^{L-1}\mathbb{P}(E_i)-\sum_{i=1}^{L-1}\sum_{j=1}^i\mathbb{P}(E_i\cap E_{-(L-j)}).
	\end{align}
	The $(a)$ and $(b)$ follows from mutually exclusive nature of $\{E_{-(L-1)}\dots E_0\}$ and $\{E_1,\dots E_{(L-1)}\}$ respectively. Now consider second term from the above.
	\begin{align} 
	\sum_{i=1}^{L-1}&\sum_{j=1}^i\mathbb{P}(E_i\cap E_{-(L-j)}) \nonumber \\
	=&\sum_{i=1}^{L-1}\sum_{j=1}^i\mathbb{P}(E_{-(L-j)})\mathbb{P}(E_i| E_{-(L-j)}) \nonumber  \\
	= &\frac{1}{KL+\mathbb{E}[W]}\sum_{i=1}^{L-1}\sum_{j=1}^i\mathbb{P}(W=(i-j)K) \nonumber \\
	= &\frac{1}{KL+\mathbb{E}[W]}\sum_{i=1}^{L-1}(L-i)\mathbb{P}(W=(i-1)K)\nonumber. 
	\end{align}
	In the second equality follows from \eqref{eq:pi} and given $E_{-(L-j)}$, $E_i$ takes place if backoff duration~($W$) is exactly $(i-j)K$ slots. Using above in \eqref{eq:prob_inter_1} gives
	\begin{align}\label{eq:P_I2}
	p_{I2}
	=&\frac{2L-1}{KL+\mathbb{E}[W]} \nonumber \\
	&-\frac{1}{KL+\mathbb{E}[W]}\sum_{i=1}^{L-1}(L-i)\mathbb{P}(W=(i-1)K).
	\end{align}
	The probability of tagged radar suffering no interference in two radar network is given by $1-p_{I2}$, and in $M$ independent radar network is given by $(1-p_{I2})^{M-1}$. So, the probability of interference at the tagged radar in $M$ radar network is,
	\[p_I = 1-(1-p_{I2})^{M-1}.\]
\end{IEEEproof}
\begin{remark}
If the distance between the tagged and interfering radars is $d=j\Delta c, j \in \{0,1, \dots\}$, we have the same expression for $p_{I2}$.
\end{remark}
\subsection{The case $\delta > 0, d \geq 0$}
Consider the two radar system with $\delta \in [0, \Delta), d > 0$. The time condition for interference~\eqref{eq:time_condition} at the tagged radar in this scenario reduces to 
\[(t_1-1)\Delta < t_2 \Delta + \delta + \frac{d}{c}<(t_1+1)\Delta, \]
where $t_1, t_2$ are chirp start time slot indices of tagged and interfering radars, $ d $ is distance and $\delta$ is the clock offset at interfering radar with respect to the tagged radar. Now consider above time condition, and let $\delta_1 =\delta +\frac{d}{c}$.
\begin{align*}
(t_1-1)\Delta <& t_2 \Delta +\delta_1 <(t_1+1)\Delta \nonumber \\
(t_1-1)-\frac{\delta_1}{\Delta} <& t_2   <(t_1+1)-\frac{\delta_1}{\Delta}.
\end{align*}
 As $t_1, t_2$ are integers, 
 \begin{align}\label{eq:t_2}
 t_2 = t_1-1-\Big\lfloor\frac{\delta_1}{\Delta}\Big\rfloor \text{  or } 
t_1+1-\Big\lceil\frac{\delta_1}{\Delta}\Big\rceil.
 \end{align}
 Observe that irrespective of the exact values of $d, \delta$ chirp start slot index~$t_2$ can take two successive integers values. If the chirp at the interfering radar starts at any of these two slot indices then tagged radar suffers interference. Note that in the first scenario $d=0, \delta = 0 $ (Section~\ref{subsec:1_prop}) the number of possible slot indices that can cause interference was only one.
 \par Now we define events similar to the one defined in Section~\ref{subsec:1_prop} that can cause interference at the tagged radar. Without loss of generality let us assume a packet transmission started at the tagged radar at time slot $t_1$. Let us define events $\{E_{-(L-1)}, \dots,E_{L-1}\}$ at the interfering radar, where event $E_i, i \in \{-(L-1)\dots L-1\}$ is starting packet transmission at slot
  \[t_1+iK-1-\Big\lfloor\frac{\delta_1}{\Delta}\Big\rfloor \text{  or  }t_1+iK+1-\Big\lceil\frac{\delta_1}{\Delta}\Big\rceil.\] These are the only events that can cause interference at the tagged radar. 
  \par For example consider $L=2, K=2, \delta = 0.5\Delta, d=0.2c\Delta,\text{ so }\delta_1=0.7\Delta$. Consider $E_{0}$, transmission cycle at interfering starts at slot $t_2 = t_1-1$ or $t_2 = t_1$. Similarly for $E_1$, the transmission cycle starts at  $t_2=t_1+1\text{ or } t_1+2 $ as explained in Figure~\ref{fig:2_prop}.
\begin{figure}	
  	\centering
  	\includegraphics[width=7cm, height=4.5cm]{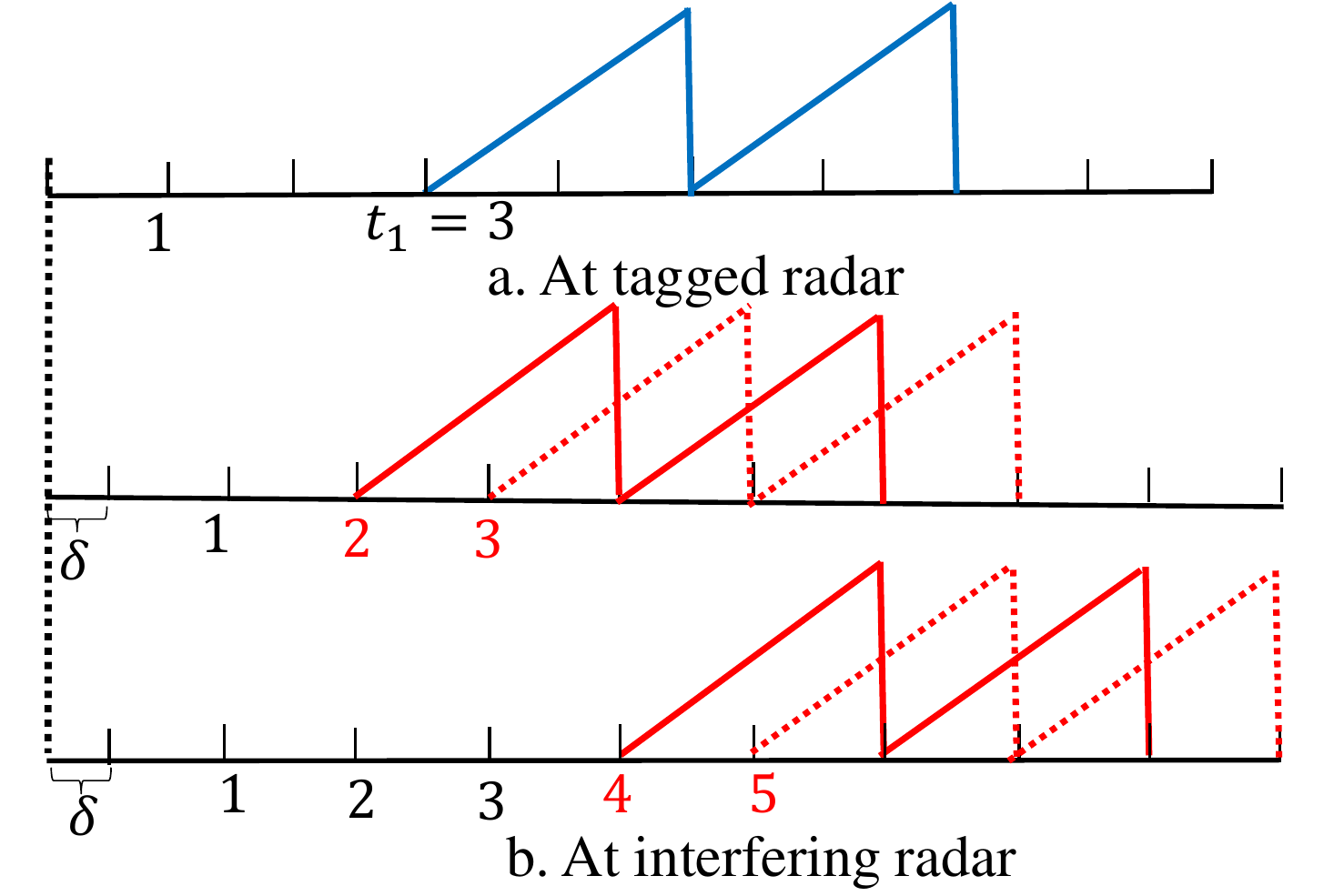}
  	\caption{Illustrating the interference events $E_0, E_1$ for $K=2,L=2$, a. The tagged radar transmitting a packet, b. Interfering events at the interfering radar }
  	\label{fig:2_prop}
\end{figure} 
 We compute the probability of these events in Lemma \ref{lemma:prob_E_i}.
\begin{lemma}\label{lemma:prob_E_i}
The probability of event $E_i, i \in \{-(L-1)\dots L-1\}$ is given by
\begin{align}\label{eq:PI_2_case_b}
\mathbb{P}(E_i) = \begin{cases}
\frac{2}{KL+\E[W]} \text{     for } KL>1 \\
\frac{1}{1+\E[W]}  +\frac{1-\P(W=0)}{L+\E[W]}\text{ for } KL=1.
\end{cases}
\end{align}
\end{lemma}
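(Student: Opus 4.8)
The plan is to reduce the computation to a single-radar quantity: the stationary probability that the interfering radar begins a packet in a prescribed slot. This reduction is legitimate because in slotted ALOHA the two radars' states evolve independently, so $\P(E_i)$ can be computed for the interfering radar in isolation, with the tagged radar's packet-start slot $t_1$ treated as fixed. From the renewal--reward analysis leading to \eqref{eq:pi} --- whose renewal epochs are exactly the slots in which a radar is in state $1$, i.e.\ the first slots of packets --- the long-run fraction of slots in which a packet starts equals $\pi_1 = \frac{1}{KL+\E[W]}$; equivalently, in the stationary regime the probability that a packet starts in any fixed slot is $\pi_1$.

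Next I would unpack the definition of $E_i$. Writing $a_i := t_1 + iK - 1 - \lfloor \delta_1/\Delta\rfloor$, and noting that away from the measure-zero set where $\delta_1/\Delta \in \mathbb{Z}$ (recall $\delta_1 = \delta + d/c$ is a continuous parameter) one has $\lceil \delta_1/\Delta\rceil = \lfloor \delta_1/\Delta\rfloor + 1$, the second alternative in the definition of $E_i$ is precisely the slot $a_i + 1$. Hence $E_i = A \cup B$, where $A = \{\text{interfering radar starts a packet in slot } a_i\}$ and $B = \{\text{interfering radar starts a packet in slot } a_i + 1\}$. By inclusion--exclusion together with the previous paragraph, $\P(E_i) = \P(A) + \P(B) - \P(A\cap B) = 2\pi_1 - \P(A\cap B)$, so everything reduces to the overlap term $\P(A\cap B)$ --- the probability that the interfering radar begins packets in two consecutive slots.

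The two cases of the lemma then emerge naturally. If $KL > 1$, consecutive packet starts are separated by at least $KL + W \ge KL > 1$ slots, so $A \cap B = \emptyset$ and $\P(E_i) = 2\pi_1 = \frac{2}{KL+\E[W]}$. If $KL = 1$ (so $K = L = 1$ and a packet is a single one-slot chirp), then on the event $A$ the next packet starts $1 + W$ slots later, so $B$ occurs given $A$ if and only if the ensuing backoff satisfies $W = 0$; since that backoff is an independent draw, $\P(A\cap B) = \pi_1\,\P(W = 0) = \frac{\P(W=0)}{1+\E[W]}$, whence $\P(E_i) = \frac{2-\P(W=0)}{1+\E[W]}$, which coincides with $\frac{1}{1+\E[W]} + \frac{1-\P(W=0)}{L+\E[W]}$ after substituting $L = 1$.

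I expect the only delicate step --- and the main obstacle --- to be the rigorous justification that a packet start in a fixed slot has probability $\pi_1$, and that the overlap term factorizes as $\pi_1\,\P(W=0)$. Both are consequences of the renewal structure already used for \eqref{eq:pi} (the renewal epochs being the packet-start slots) together with the independence of the two radars, but they should be phrased for the stationary process (or as a limit over the slot index), exactly as $\pi_i$ was defined, rather than for an arbitrary finite-time slot; the boundary case $\delta_1/\Delta \in \mathbb{Z}$, in which $E_i$ degenerates to a single slot, has probability zero and is excluded.
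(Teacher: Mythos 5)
Your proof is correct and reaches the paper's formula in both cases; for $KL>1$ it is identical to the paper's argument (two mutually exclusive candidate slots, each of probability $\pi_1$). Where you differ is the $KL=1$ case. The paper writes $E_i$ as the disjoint union of ``a packet starts at $t_2$'' and ``$t_2$ is the last slot of a backoff,'' and evaluates the second probability by introducing an auxiliary state space (state $j$ = $j$-th consecutive backoff slot), deriving the occupation fractions $\alpha_j = \P(W\geq j)/(1+\E[W])$ by a separate renewal argument, and summing $\sum_{j\geq 1}\alpha_j\,\P(W=j)/(1-F(j-1)) = (1-\P(W=0))/(1+\E[W])$. You instead use inclusion--exclusion, $\P(E_i)=2\pi_1-\P(A\cap B)$, and compute the overlap directly as $\pi_1\P(W=0)$ from the independence of the ensuing backoff draw; the two computations agree since $\P(B\setminus A)=\pi_1-\pi_1\P(W=0)$. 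Your route is more economical --- it avoids the auxiliary renewal process entirely and makes it transparent that the only correction to $2\pi_1$ is the event of a zero-length backoff --- and you also handle the degenerate boundary case $\delta_1/\Delta\in\mathbb{Z}$ (where the two candidate slots do not form a consecutive pair and $E_i$ reduces to a single slot) explicitly, which the paper passes over. The paper's version, in exchange, exhibits the $\alpha_j$ occupation measures, which could be reused if one needed the probability of being deeper inside a backoff. The one step you rightly flag as delicate --- that a packet start in a fixed slot has long-run probability $\pi_1$ and that the overlap factorizes --- is invoked at exactly the same level of rigor in the paper's own proof, so no gap relative to the original.
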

\begin{IEEEproof}
First we consider the case $KL=1$. In this case we have only one chirp of length one slot in each transmission cycle. Let $t_1$ be the start of packet transmission at the tagged radar. From \eqref{eq:t_2} we have two successive slots in which if the packet transmission starts at the interfering radar the tagged radar suffers interference. Let us call these $t_2, t_2'$. Observe that when $KL=1$, two consecutive packets can start in $t_2,t_2'$ if the backoff duration is zero between them as shown in Figure~\ref{fig:KL=1}b. Note that if $KL>1$ two consecutive packets can not start in any two successive slots. The event $E_i$ is defined as the packet starting in any one of these slots.
\begin{figure}	
	\centering
	\includegraphics[width=4.5cm, height=2.75cm]{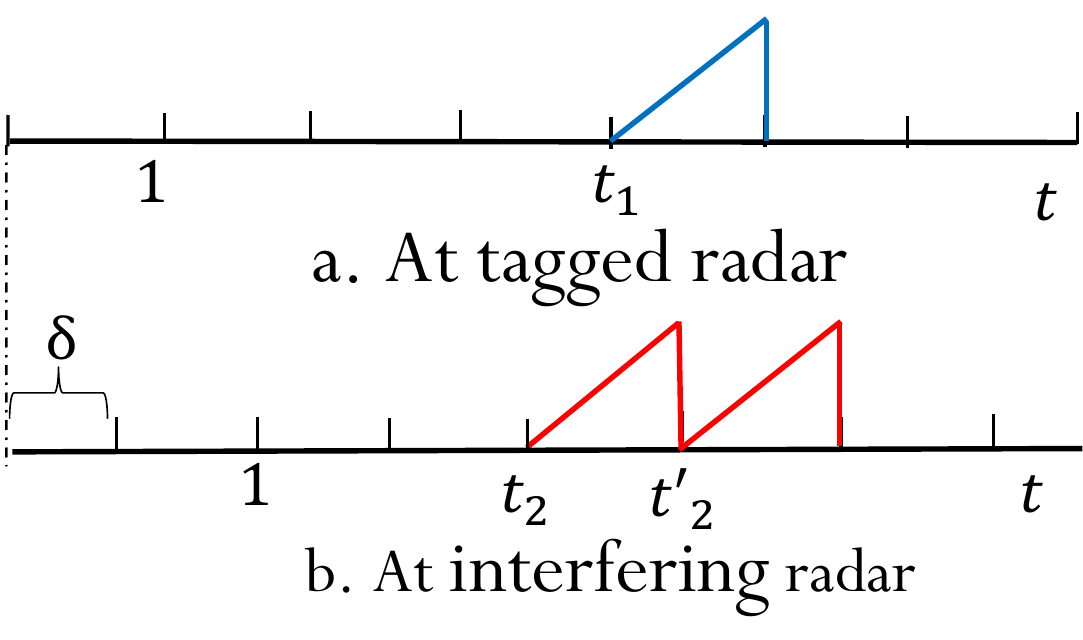}
	\caption{
		Illustrating the interference event $E_0$ for $KL=1$, a. The tagged radar transmitting a packet, b. Interfering events at the interfering radar }
	\label{fig:KL=1}
\end{figure}
\par In order to compute $\P(E_i)$ let us define the state of a radar in a slot as follows. We say that the radar is in state $0$ if it is transmitting. Further, we say that it is in state $j$ if it is in $j$-th consecutive backoff slot. Note that the latter case arises only if the radar is in the middle of a backoff of length greater than or equal to $j$. Also, notice that these definitions of states are different from those in Section~\ref{sec:Performance_analysis}. Let us define $\alpha_j$ to be the long term fraction of slots the radar spends in state $j$. The radar's state evolves according to a renewal process with the slot in which the radar is in state $0$ being the renewal epochs. We can therefore derive $\alpha_j, \text{ for  }j \geq 1$s as follows.

\begin{align*}
\alpha_j 
&=\lim_{T\rightarrow \infty} \frac{\text{No. of transmission cycles until slot } T\text{ }}{T}\P(W\geq j)\nonumber \\
&=\frac{\P(W\geq j)}{1+\E[W]},
\end{align*}
the last equality follows from Elementary Renewal theorem~\cite{AK}.
\par Now consider the event $E_i$ at interfering radar i.e., starting the chirp at slot $t_2$ or radar is in the last time slot of the backoff at slot $t_2$, (this automatically makes $t'_2=t_2+1$ chirp starting slot). Probability of starting chirp in $t_2$ is given by $\frac{1}{1+\E[W]}$. Now the probability of $t_2$ being the last slot of backoff is given by 
\[\sum_{j \geq 1}\alpha_j \frac{\P(W=j)}{1-F(j-1)}.\] Using these two we compute $\P(E_i)$.
\begin{align*}
\P(E_i) =& \P(\text{transmission cycle starting at slot }t_2 \text{ or }t_2+1)  \\
=&\frac{1}{1+\E[W]}+\sum_{i \geq 1}\alpha_j \frac{\P(W=j)}{1-F(j-1)} \\
=&\frac{1}{1+\E[W]}+\frac{1-\P(W=0)}{1+\E[W]}
\end{align*}
\par Now consider $ KL > 1 $. In this case the packet transmission starting slots at $ t_2 $ and $t_2 +1 $ are mutually exclusive events and from \eqref{eq:pi} we write $\P(E_i)=\frac{2}{KL+E[W]}$.
\end{IEEEproof}
 \par Using the $\P(E_i), i \in \{-(L-1)\dots L-1\}$ derived in Lemma \ref{lemma:prob_E_i} we derive the probability of interference at the tagged radar in two radar network and $M$ radar network in Proposition~\ref{prop:Prop_2}.
\begin{proposition}\label{prop:Prop_2}
	\begin{enumerate}
		\item The probability of interference at the tagged radar in two radar network is given by,
		\begin{align*}
		p_{I2} = &(2L-1)\mathbb{P}(E_1) \nonumber \\ 
		&-\frac{1}{KL+\mathbb{E}[W]}\sum_{i=0}^{L-2}(L-i-1)\Big( 2\P\big(W=iK\big) \nonumber \\&+\P\big(W=iK+1\big)+\P\big(W=iK-1\big)\Big).
		\end{align*}
		where $\mathbb{P}(E_i)$ is given by \eqref{eq:PI_2_case_b}.
		\item The probability of interference at the tagged radar in $M$ radar system is given by,
		\begin{align*}
		p_{I} = 1-(1-p_{I2})^{M-1}.
		\end{align*}
	\end{enumerate}
\end{proposition}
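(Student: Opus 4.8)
The plan is to mirror the inclusion--exclusion argument used for Proposition~\ref{prop:Prop_1}, now accounting for the fact that each interference event $E_i$ consists of \emph{two} consecutive candidate start slots, $u_i$ and $u_i+1$ with $u_i=t_1+iK-1-\lfloor\delta_1/\Delta\rfloor$, rather than a single slot. Throughout I will take $K\geq 2$ (the realistic regime; the degenerate cases $\delta_1/\Delta\in\mathbb Z$ or $KL=1$ collapse $E_i$ to one slot and reduce to situations already handled). First I would argue, exactly as before, that $\{E_{-(L-1)},\dots,E_0\}$ are pairwise mutually exclusive and so are $\{E_0,\dots,E_{L-1}\}$: if two events from one of these groups occurred together, the interfering radar would exhibit two packet starts within $(L-1)K+1<KL$ slots, contradicting the minimum $KL$-slot spacing between successive packets. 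This also rules out triple intersections, so the disjointification $\mathbb P(\cup_i E_i)=\sum_i\mathbb P(E_i\setminus\cup_{j<i}E_j)$ telescopes to
\[
p_{I2}=\sum_{i=-(L-1)}^{L-1}\mathbb P(E_i)-\sum_{i=1}^{L-1}\sum_{j=1}^{i}\mathbb P\big(E_i\cap E_{-(L-j)}\big),
\]
and the first sum equals $(2L-1)\mathbb P(E_1)$ since Lemma~\ref{lemma:prob_E_i} gives a value of $\mathbb P(E_i)$ independent of $i$.

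Next I would evaluate the cross terms. Conditioning on $E_{-(L-j)}$, the interfering radar starts a packet at $u_{-(L-j)}$ or at $u_{-(L-j)}+1$, each with probability $1/(KL+\mathbb E[W])$ by~\eqref{eq:pi}, and these two alternatives are mutually exclusive. Since $u_i-u_{-(L-j)}=KL+(i-j)K$ with $0\leq i-j\leq L-2$, the gap is too short to contain an intervening packet, so $E_i$ occurs precisely when the next backoff $W$ aligns the following packet with $\{u_i,u_i+1\}$: this requires $W\in\{(i-j)K,(i-j)K+1\}$ if the first start was $u_{-(L-j)}$ and $W\in\{(i-j)K-1,(i-j)K\}$ if it was $u_{-(L-j)}+1$. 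Hence $\mathbb P(E_i\cap E_{-(L-j)})=\big(2\mathbb P(W=(i-j)K)+\mathbb P(W=(i-j)K+1)+\mathbb P(W=(i-j)K-1)\big)/(KL+\mathbb E[W])$; summing over $1\leq j\leq i\leq L-1$ and re-indexing by $n=i-j$, which appears with multiplicity $L-1-n$, produces exactly the double sum in the statement.

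Part (2) then follows as in Proposition~\ref{prop:Prop_1}: under ALOHA the $M$ radars' state processes are independent, the tagged radar avoids interference from each of the other $M-1$ radars independently with probability $1-p_{I2}$, and so $p_I=1-(1-p_{I2})^{M-1}$. The main obstacle is the cross-term computation: unlike Proposition~\ref{prop:Prop_1}, each term now splits according to which of the two slots of $E_{-(L-j)}$ carries the packet start, which is what generates the doubled $W=iK$ contribution together with the $W=iK\pm 1$ terms; one must also verify carefully that no intermediate packet can fall between the two conditioned starts and that the $K\geq 2$ assumption is precisely what secures the same-sign mutual exclusivity invoked in the first step.
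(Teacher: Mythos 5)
Your proof follows the same route as the paper's: the identical inclusion--exclusion decomposition $p_{I2}=\sum_{i}\mathbb{P}(E_i)-\sum_{i=1}^{L-1}\sum_{j=1}^{i}\mathbb{P}(E_i\cap E_{-(L-j)})$ carried over from Proposition~\ref{prop:Prop_1}, with the cross terms evaluated by conditioning on which of the two candidate start slots of $E_{-(L-j)}$ carries the packet start, yielding the $2\mathbb{P}(W=(i-j)K)+\mathbb{P}(W=(i-j)K+1)+\mathbb{P}(W=(i-j)K-1)$ pattern and, after re-indexing with multiplicity $L-1-n$, the stated double sum. You in fact supply more detail than the paper does (the explicit two-case split behind the conditional probability, the check that no intervening packet can occur, and the role of $K\geq 2$ in the mutual-exclusivity claim), so the argument is correct and essentially the paper's own.
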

\begin{proof}
	Using similar arguments as in the proof of Proposition~\ref{prop:Prop_1} we have,
	\begin{align}\label{eq:prob_inter_1_case_b}
	p_{I2}=&\mathbb{P}\big(\cup_{i=-(L-1)}^{L-1}E_i\big) \nonumber \\ 
	=&\sum_{i=-(L-1)}^{L-1}\mathbb{P}(E_i)-\sum_{i=1}^{L-1}\sum_{j=1}^i\mathbb{P}(E_i\cap E_{-(L-j)}).
	\end{align}
	Now consider the second term from above,
	\begin{align*}\label{eq:prob_inter_2_case_b}
	\mathbb{P}&(E_i\cap E_{-(L-j)})  \nonumber \\
	=& \mathbb{P}(E_{-(L-j)})\mathbb{P}(E_i| E_{-(L-j)})\nonumber \\
	=&\frac{1}{KL+\mathbb{E}[W]}\Big(2\mathbb{P}\big(W=(i-j)K\big)\nonumber \\
	&+\mathbb{P}\big(W=(i-j)K-1\big)+\mathbb{P}\big(W=(i-j)K+1\big) 
	\end{align*}
	 Using above in \eqref{eq:prob_inter_1_case_b}, we get $p_{I2}$.

	Using the similar arguments as in the proof of Proposition~\ref{prop:Prop_1} we have,
	\[p_I = 1-(1-p_{I2})^{M-1}.\]
\end{proof}

\subsection{Throughput}\label{subsec:throughput}
We compute the throughput we defined in Section~\ref{subsec:Radar Network} in the following Proposition~\ref{prop:theta}.
\begin{proposition}\label{prop:theta}
	Throughput of $M$ radar network is given by,
	\begin{align}\label{eq:Theta}
	\Theta = MLK\pi_1(1-\mathbb{P}_{I2})^{M-1}
	\end{align}
\end{proposition}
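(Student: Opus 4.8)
The plan is to combine the renewal--reward argument already used for \eqref{eq:pi} with the independence of the radars' state processes under ALOHA. By definition $\Theta = \lim_{T\to\infty} KL\, S(T)/T$, and $S(T) = \sum_{k=1}^{M} S_k(T)$, where $S_k(T)$ is the number of packets radar $k$ transmits successfully up to slot $T$, a packet being successful precisely when none of its $L$ chirps is interfered by any other radar. Since all radars have identical chirp parameters and i.i.d.\ backoffs, the $S_k(T)$ all have the same limiting behaviour, so it suffices to evaluate $\lim_{T\to\infty} S_k(T)/T$ for one tagged radar $k$ and then multiply by $M$.

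For the tagged radar I would reuse the renewal process whose renewal epochs are the slots in which it is in state $1$, i.e.\ the packet-start slots, with cycle lengths $C_i$ satisfying $\E[C_i]=KL+\E[W]$ exactly as in \eqref{eq:RRT}. To the $i$-th cycle attach the reward $R_i=1$ if the packet sent in that cycle is successful and $R_i=0$ otherwise, so that the reward accumulated up to slot $T$ is $S_k(T)$. The Renewal Reward theorem~\cite{AK} then gives
\[
\lim_{T\to\infty}\frac{S_k(T)}{T}=\frac{\E[R_1]}{\E[C_1]}=\frac{\P(\text{packet successful})}{KL+\E[W]}=\pi_1\,\P(\text{packet successful}),
\]
by \eqref{eq:pi}. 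It then remains to identify $\P(\text{packet successful})$ with $(1-\mathbb{P}_{I2})^{M-1}$, where $\mathbb{P}_{I2}$ is the quantity $p_{I2}$ computed in Proposition~\ref{prop:Prop_1} for the synchronous collocated case and in Proposition~\ref{prop:Prop_2} in general.

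This identification is the crux of the argument. Conditioned on the tagged radar starting a packet, that packet is successful exactly when none of the other $M-1$ radars triggers one of the interference events $\{E_{-(L-1)},\dots,E_{L-1}\}$ studied in Section~\ref{subsec:1_prop} and Lemma~\ref{lemma:prob_E_i}. Under ALOHA the $M$ radars evolve independently, so the $M-1$ events \emph{radar $l$ interferes with this packet} are conditionally independent given the tagged radar's packet; moreover, averaged over the long run of the tagged radar's packets the phase of each interfering radar relative to the packet is governed by the stationary occupancy $\pi$ used to define $p_{I2}$, so each such event has probability exactly $\mathbb{P}_{I2}$. Hence $\P(\text{packet successful})=(1-\mathbb{P}_{I2})^{M-1}$, which with the display above yields $\lim_{T\to\infty} S_k(T)/T=\pi_1(1-\mathbb{P}_{I2})^{M-1}$; summing over the $M$ radars and multiplying by $KL$ gives $\Theta=MLK\,\pi_1(1-\mathbb{P}_{I2})^{M-1}$.

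The step I expect to be the main obstacle is making this last paragraph rigorous: that the interference rate experienced by a typical packet of the tagged radar coincides with the $\pi$-based probability $\mathbb{P}_{I2}$, and that the $M-1$ interference events factorise into a product. The independence across interferers is immediate from the ALOHA assumption; matching the empirical per-interferer rate with $\mathbb{P}_{I2}$ can be obtained either from a two-dimensional renewal--reward (or regenerative-process) computation on the pair consisting of the tagged radar and one interfering radar, whose renewal processes are independent, or simply by observing that the conditioning already present in Proposition~\ref{prop:Prop_1} and Lemma~\ref{lemma:prob_E_i} is taken with respect to exactly that stationary phase. All remaining steps are direct substitutions of \eqref{eq:pi} and the earlier propositions.
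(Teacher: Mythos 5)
Your proposal is correct and follows essentially the same route as the paper: a renewal--reward computation per radar with cycle length of mean $KL+\E[W]$, a per-cycle success probability $(1-p_{I2})^{M-1}$ obtained from the independence of the $M-1$ interferers under ALOHA, and multiplication by $M$ and $KL$. The only difference is that you make explicit (as a Bernoulli reward and a factorisation step) what the paper states informally as ``the average number of successful transmission cycles is $\phi(T)(1-p_{I2})^{M-1}$,'' so your write-up is, if anything, slightly more careful than the paper's.
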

\begin{IEEEproof}
	From the definition of throughput~(see \ref{para:Theta}), we have
	\[\Theta = \lim_{T\rightarrow \infty} KL\frac{S(T)}{T},\]
where $S(T)$ total number of packets transmitted successfully by all radars in $T$ time slots. Now consider a radar in the network. Let the total number of transmissions cycles until time slot $T$ is given by $\phi(T)$. The average number of successful packets transmission cycles is given by $\phi(T)(1-p_{I2})^{M-1}$ and number of chirps transmitted is given by $L\phi(T)(1-p_{I2})^{M-1}$. In $M$ independent radars network the average number of successful transmission cycles is given by $M\phi(T)(1-p_{I2})^{M-1}$. Using this we can write,
\begin{align*}
\Theta &= \lim_{T\rightarrow \infty} \frac{KML\phi(T)(1-p_{I2})^{M-1}}{T} \nonumber \\
&=\frac{1}{KL+\E[W]}MKL(1-p_{I2})^{M-1} \nonumber \\
&=MLK\pi_1(1-p_{I2})^{M-1} 
\end{align*}
where the first and second equalities follow from Elementary Renewal and Renewal Reward theorems~\cite{AK}.
\end{IEEEproof}

\subsection{$p_I$ and $p_{opt}$ for geometric backoff } \label{rem:p_opt}
In this section we present results obtained for geometric backoff distribution, $W\sim\text{Geo}(p)$ where $p$ is the attempt probability and supported on the set $\{0,1,2,\dots\}$.
\begin{enumerate}
	\item The probability of interference at the tagged radar in two radar network is given by, 	
	\begin{align*}
	p_{I2}=& (2L-1)\mathbb{P}(E_i) \nonumber \\ 
	&-\frac{p^2}{pKL+1-p}\sum_{i=1}^{L-1}(L-i)\Big(2(1-p)^{(i-1)K} \nonumber\\
	&+(1-p)^{(i-1)K+1}+(1-p)^{(i-1)K-1}\Big), 
	\end{align*}
	where \begin{align*}
	\mathbb{P}(E_i) = \begin{cases}
	\frac{2}{KL+\E[W]} \text{     for } KL>1 \\
	\frac{1}{1+\E[W]} + \frac{p\E[W]}{1+\E[W]}\text{ for } KL=1.
	\end{cases}
	\end{align*}
	and $\E[W]=\frac{1-p}{p}$.
	\item  
	Further, we consider number of chirps per packet equal to one i.e., $L=1$ and evaluate the attempt probability $p_{opt}$ at which the network has the maximum throughput.	
\begin{align*}
p_{opt} = \begin{cases}
\frac{1}{2M+1-K}, \text{ if  } 2M+1-K > 0 \\
1, \text{   otherwise. }  
\end{cases}
\end{align*}
\end{enumerate}
We observe that a centralized transmission scheduling can achieve a throughput close to $ \min\{K,M\} $ which is much more than the throughput achieved by ALOHA, e.g., see Figure~\ref{fig:ALOHA_1}b. This motivates us to explore other MAC protocols.
\section{CSMA} \label{sec:CSMA Algo}
\par In slotted ALOHA, the radars initiate transmissions irrespective of the activities of other radars. Consequently, in most of the scenarios, slotted ALOHA achieves a throughput far below the best possible values~(see Figure~\ref{fig:ALOHA_1}b). This motivates us exploring other MAC protocols  that account for other radars' activities. In this section, we study a MAC protocol in which a radar senses the medium following each backoff, and initiates a transmission only if the medium is sensed idle.

\paragraph*{CSMA} In CSMA, a radar after each backoff, senses the medium for one slot which is referred to as clear channel assessment (CCA) slot. In the CCA slot, the radar sets the Tx chirp frequency input to the mixer to $f_{\min}$ and observes the output of its LPF at the end of the CCA slot. If the radar observes a signal (of frequency less than $f_H$), it assumes the CCA to have failed and enters into another random backoff. If the radar does not observe a signal at the end of the CCA slot, it assumes the CCA to have succeeded and starts a packet transmission on completion of the CCA slot. We illustrate the flow chart of our CSMA protocol in Figure~\ref{Algo:CSMA Algorithm}. 
\begin{remark}
	Note that if the radar observes a signal at the end of the CCA slot and still starts a packet transmission in the next slot, this packet would interfere with the packet that caused the observed signal. In other words a transmission following a CCA failure would certainly lead to interference. On the other hand, a transmission following a CCA success need not be interference free as we explain in Section~\ref{subsubsec:coll-L>1}. 
\end{remark} 
\par Recall that in ALOHA each radar transmits packets irrespective of other radars' transmissions. But in CSMA all the radars' backoffs, CCAs and packet transmissions are coupled. In the following subsections we analyze CSMA in various cases.

\begin{figure}	
	\centering
	\includegraphics[width=5cm, height=5cm]{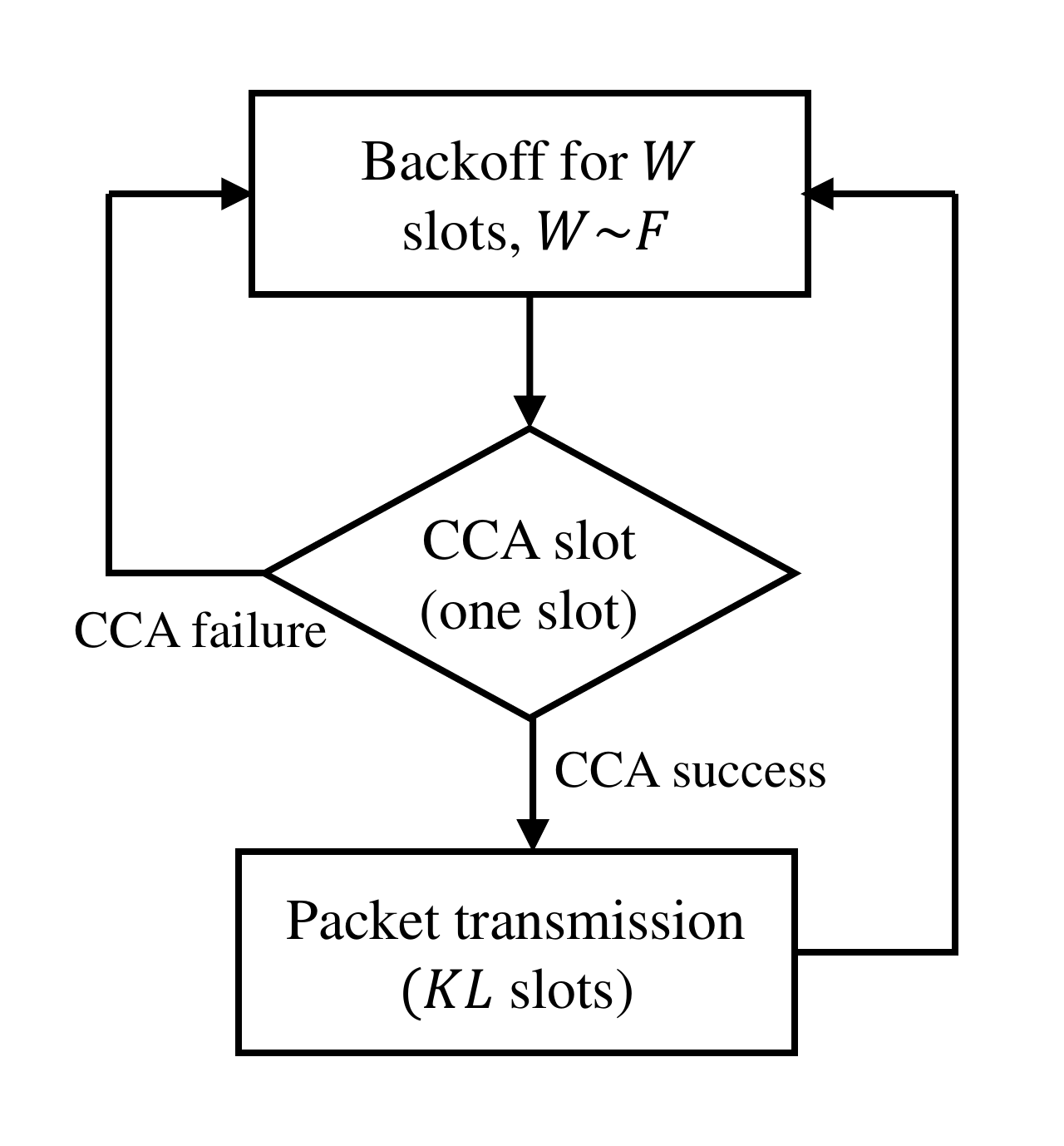}
	\caption{CSMA, After every backoff radar senses the medium for $f_{\min}$~(CCA), if CCA is successful then it transmits else enter into backoff}
	\label{Algo:CSMA Algorithm}
\end{figure}

\subsection{Collocated radars} \label{subsec:collocated-Async}
We consider a collocated and asynchronous radar network. Further, we first consider number of chirps per packet, $L$ to be one. We consider $L > 1$ subsequently.

\subsubsection{$L=1$} \label{subsubsec:collocated-L=1}
 In this case we observe that the radars do not suffer interference in CSMA. Observe that, in CSMA, for a CCA success at a radar, the immediately preceding chirp transmission by any of the other radars must have started at least $\Delta$ time before the end of the CCA slot. Consequently, start times of packet transmissions by the radars are separated by at least $ \Delta $, and so the packets will not interfere with each other. We illustrate it by showing sample CCA failure and CCA success events in Figure~\ref{fig:CSMA-collocated-L=1}. Let $t_1$ be the CCA slot at the tagged radar. The tagged radar senses the $f_{\min}$ of the chirp transmitted (dotted red color) by the interfering radar, experiences CCA failure as shown in Figure~\ref{fig:CSMA-collocated-L=1}. The next CCA attempt is successful and tagged radar starts transmission immediately after the CCA slot.
 \begin{figure}	
 	\centering
 	\includegraphics[width=5.75cm, height=1.5cm]{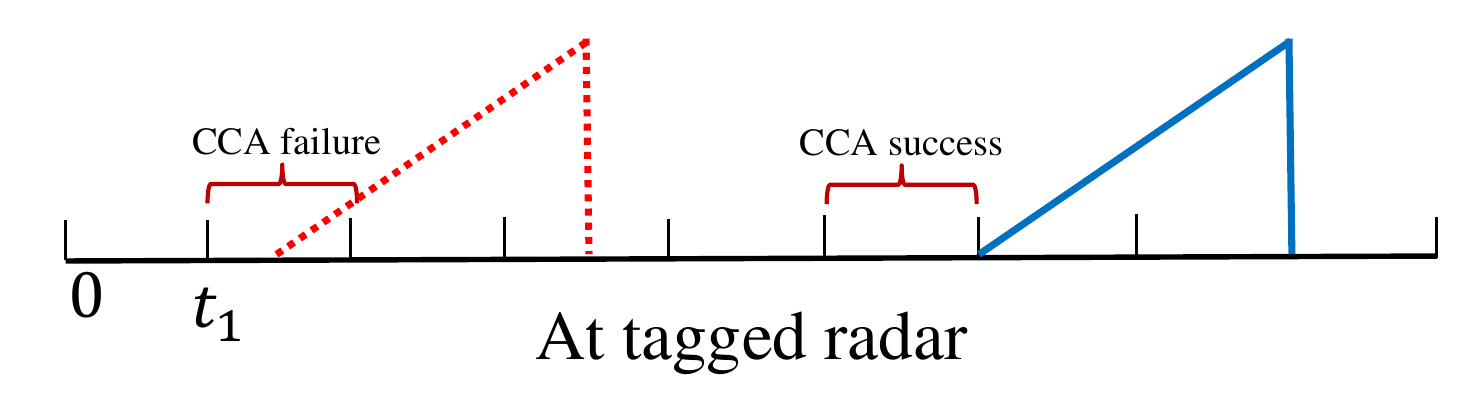}
 	\caption{Illustrating CCA success, CCA failure and packet transmission at the tagged radar in a collocated radar network for $L=1, K=2$}
 	\label{fig:CSMA-collocated-L=1}	
 \end{figure}
\par In the following we compute the throughput. Let us consider a tagged radar. We assume that the backoffs are sampled from Geo$(p)$ distribution. Let $ I_s(T) $ be the number of successful transmissions at the tagged radar until slot $ T $. Further, let $p_s$ be the conditional CCA success probability at the tagged radar i.e., the probability of CCA success given the tagged radar is in a CCA. We define the tagged radar's transmission rate and the aggregate throughput, $ p_r $ and $\Theta $, respectively, as follows.
\begin{align}\label{eq:p_r}
p_r := 
\lim_{T\rightarrow \infty}\frac{I_s(T)}{T}.
\end{align}
 Using above and $L=1$ in the definition of throughput~\ref{para:Theta} we have
\begin{align}\label{eq:throghput_Async}
\Theta = MKp_r.
\end{align} 
\begin{remark}\label{rem:max_pr}
	Observe that the maximum throughput that can be achieved by any algorithm for $L=1$ is $\min\{M,K\}$. Using this in \eqref{eq:throghput_Async} gives $p_r\leq \frac{1}{\max\{M,K\}}$. 
\end{remark}
In the following we compute the throughput using decoupling approximation and fixed point analysis~\cite{AnuragKumar},\cite{Bianchi}. 
Towards this we find the relation between $p_r, p_s$ in Lemma~\ref{lemma:p_r}.
\begin{lemma}\label{lemma:p_r}
	\begin{align*}
		p_r = \frac{1}{K+\frac{1}{pp_s}}
	\end{align*}
\end{lemma}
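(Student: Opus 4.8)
The plan is to analyze a single (tagged) radar under the decoupling approximation, treating its environment as stationary so that each time it enters a CCA slot it succeeds with probability $p_s$ independently of the past. Under this assumption, the tagged radar's timeline decomposes into i.i.d.\ renewal cycles, where a renewal epoch is the start of a CCA slot that immediately follows a successful transmission (equivalently, the start of a ``fresh'' attempt phase). First I would describe one such cycle precisely: it consists of some number $N \ge 1$ of (backoff, CCA) rounds, where each of the first $N-1$ CCAs fails and the $N$-th succeeds, followed by one packet transmission of length $K$ slots. Since each CCA succeeds independently with probability $p_s$, the number of CCA rounds $N$ is geometric with $\mathbb{E}[N] = 1/p_s$.

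Next I would compute the expected cycle length $\mathbb{E}[C]$. Each round contributes one backoff of mean $\mathbb{E}[W] = (1-p)/p$ slots (since $W \sim \mathrm{Geo}(p)$) plus one CCA slot, so a round has mean length $1 + \mathbb{E}[W] = 1/p$. A cycle contains $N$ such rounds plus a terminal transmission of $K$ slots (this happens once per cycle, since exactly one CCA per cycle succeeds and $L=1$). By Wald's identity,
\begin{align*}
\mathbb{E}[C] = \mathbb{E}[N]\cdot\frac{1}{p} + K = \frac{1}{p p_s} + K.
\end{align*}
Each cycle yields exactly one successful transmission at the tagged radar, so the reward per cycle is $1$. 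Applying the Renewal Reward theorem (as invoked earlier in the paper, e.g.\ around \eqref{eq:RRT}), the long-run rate of successful transmissions is
\begin{align*}
p_r = \lim_{T\to\infty}\frac{I_s(T)}{T} = \frac{1}{\mathbb{E}[C]} = \frac{1}{K + \frac{1}{p p_s}},
\end{align*}
which is the claim.

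The main obstacle is justifying that the cycle structure is genuinely i.i.d., i.e.\ that the CCA outcomes are independent with a fixed success probability $p_s$; this is exactly the decoupling/mean-field approximation flagged in the text just before the lemma, so I would state it explicitly as the standing assumption rather than prove it. A minor subtlety worth checking is that, because $L=1$, a successful transmission is always exactly $K$ slots and is never itself interfered in the collocated case (established in Section~\ref{subsubsec:collocated-L=1}), so ``successful CCA'' and ``successful transmission'' coincide and the reward bookkeeping is clean; if $L>1$ this identification would fail and the argument would need modification. Finally I would note that finiteness of $\mathbb{E}[W]$ and $p_s > 0$ guarantee $\mathbb{E}[C] < \infty$, so the renewal-reward application is valid.
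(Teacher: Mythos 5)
Your proposal is correct and follows essentially the same route as the paper: the paper also defines renewal epochs at the packet-transmission slots, asserts $\mathbb{E}[D_i]=K+\tfrac{1}{pp_s}$ for the cycle length, and applies the Elementary Renewal theorem. The only difference is that you explicitly derive the mean cycle length via the geometric number of (backoff, CCA) rounds and Wald's identity, a detail the paper states without proof.
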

\begin{IEEEproof}
Let us define the state of a radar in a slot as follows. We say that the radar is in state $-1$ if it is sensing the medium, state $0$ if it is in backoff. Further, we say that radar is in state $i, 1\leq i \leq K$ if it is transmitting $i$-th slot of the packet. The radar's state evolves according to a renewal process with the slots in which the radar state is $1$ being the renewal
epochs. Let the duration between any two renewal epochs $i$ and $i+1$ is given by $D_i$. Observe that $D_i, i\in\{1,2,\dots\} $ is random and independent across $i$. Further, $\E[D_i] = K+\frac{1}{pp_s}$. 
\par Using Elementary Renewal theorem \cite{AK},
\begin{align*}
 p_r = &\lim_{T\rightarrow \infty}\frac{I_s(T)}{T} \\
     = & \frac{1}{K+\frac{1}{pp_s}}
\end{align*}
almost surely.
\end{IEEEproof}
To compute $p_s$ we use the decoupling approximation \cite{AnuragKumar}, \cite{Bianchi}. The decoupling approximation assumes that the aggregate attempt process of the other radars is independent of the backoff process of the tagged radar. Recall in any $\Delta$ interval of tagged radar's timeline, at most one radar can start packet transmission. The unconditional probability of any one among $M-1$ radars starting transmission in one slot is given by $(M-1)p_r$. The unconditional probability of no radar starting transmission is given by $1-(M-1)p_r$. From Remark~\ref{rem:max_pr} we observe that $1-(M-1)p_r > 0$. We want to calculate the conditional probability, $p_s$ at the tagged radar i.e., given the tagged radar is in a CCA slot the probability of its CCA success. But we use the unconditional CCA success probability $1-(M-1)p_r$ for $p_s$ in Lemma~\ref{lemma:p_r} to compute $p_r$. Further, we substitute $p_r$ in \eqref{eq:throghput_Async} to compute the throughput. 
\begin{align*}
p_r & = \frac{1}{K+\frac{1}{p\big(1-(M-1)p_r\big)}} 
\end{align*}
For a given $M,p$ we can write above as
\begin{align*}
p_r = f(p_r).
\end{align*}
In Lemma~\ref{lemma:fixed-point} we show that $p_r$ can be computed using fixed point analysis. 
\begin{lemma}\label{lemma:fixed-point}
 For a $p_r = f(p_r)$ has a unique fixed point.
\end{lemma}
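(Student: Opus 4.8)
The plan is to regard $f$ as an ordinary real function on the interval $\mathcal{I}=[0,\tfrac{1}{M-1}]$, which is the natural domain for $p_r$: by Remark~\ref{rem:max_pr} the values of $p_r$ we care about satisfy $p_r\le \tfrac{1}{\max\{M,K\}}<\tfrac{1}{M-1}$, and on $\mathcal{I}$ the quantity $1-(M-1)p_r$ lies in $[0,1]$, so its reading as an (unconditional) CCA-success probability is meaningful. Clearing the nested fraction, $f(x)=\dfrac{p\,(1-(M-1)x)}{Kp\,(1-(M-1)x)+1}$, and since the denominator is $\ge 1$ for every $x\in\mathcal{I}$, $f$ is continuous (indeed differentiable) on $\mathcal{I}$, with values in $[0,\tfrac1K)$. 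So the whole argument is a one-variable continuity/monotonicity argument.

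For existence, put $g(x):=f(x)-x$. Then $g(0)=f(0)=\dfrac{p}{Kp+1}>0$, while at the right endpoint $1-(M-1)\cdot\tfrac{1}{M-1}=0$ forces $f(\tfrac{1}{M-1})=0$, so $g\bigl(\tfrac{1}{M-1}\bigr)=-\tfrac{1}{M-1}<0$. By the intermediate value theorem $g$ has a zero in $(0,\tfrac{1}{M-1})$, i.e.\ $f$ has a fixed point there.

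For uniqueness, differentiate: writing $u(x)=1-(M-1)x$, a short computation gives $f'(x)=\dfrac{-p(M-1)}{\bigl(Kp\,u(x)+1\bigr)^2}<0$ on $\mathcal{I}$, so $f$ is strictly decreasing; consequently $g'(x)=f'(x)-1<-1<0$, so $g$ is strictly decreasing on $\mathcal{I}$ and can vanish at most once. Combined with the previous step, $g$ vanishes exactly once, i.e.\ $p_r=f(p_r)$ has a unique solution in $\mathcal{I}$, and it lies in the open interval $(0,\tfrac{1}{M-1})$.

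I do not expect a genuine obstacle: the only point requiring care is fixing the domain $\mathcal{I}$ and verifying that the denominator of $f$ never vanishes on it, which is what lets the elementary argument run. As a cross-check one can instead clear denominators and see that $p_r=f(p_r)$ is equivalent to the quadratic $Kp(M-1)y^{2}-(Kp+1+p(M-1))y+p=0$; its value at $y=0$ is $p>0$ and at $y=\tfrac{1}{M-1}$ is $-\tfrac{1}{M-1}<0$, so (the parabola opening upward) exactly one of its two real roots lies in $[0,\tfrac{1}{M-1})$, recovering the same conclusion.
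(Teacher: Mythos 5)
Your proof is correct and follows essentially the same route as the paper's: both establish that $f$ is continuous and strictly decreasing on an interval, check the sign of $f(x)-x$ at the endpoints, and invoke the intermediate value theorem together with monotonicity to get existence and uniqueness. You simply work on the slightly larger interval $[0,\tfrac{1}{M-1}]$ rather than $[0,\tfrac{1}{\max\{M,K\}}]$ and supply the explicit derivative computation and quadratic cross-check that the paper leaves implicit.
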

\begin{IEEEproof}
\begin{itemize}
	\item [a.]$f(.)$ is a continuous map from $[0,\frac{1}{\max\{M,K\}}]$ to $[f(0),f\big(\frac{1}{\max\{M,K\}}\big)]$ and decreasing in $p_r$
	\item [b.] $f\big(\frac{1}{\max\{M,K\}}\big)<\frac{1}{\max\{M,K\}}$
\end{itemize}

Hence by intermediate value theorem there exists a unique fixed point. 
\end{IEEEproof}
\par We use $p_r$ obtained from Lemma~\ref{lemma:fixed-point} in \eqref{eq:throghput_Async} to compute the throughput. We find from Section~\ref{sec:numerical_res}~(see Figure~\ref{fig:fixexpt}b) that the throughput obtained from the fixed point analysis matches the simulation with less than $10$ percent error for $M>K$. Further, from Figure~\ref{fig:Async_Col_L1_Theta_vs_p} we observe that in CSMA the throughput increases in attempt probability, and for all values of $p$ throughput of CSMA is better than ALOHA.

\subsubsection{$L>1$}\label{subsubsec:coll-L>1}
We notice that when $L>1$, packets may interfere under the proposed CSMA. We illustrate this using Figure~\ref{fig:CSMA-collocated-L=2}. Let us assume that the tagged radar starts CCA at slot $t_1$, while other radar transmitting a packet~(dotted red color). The tagged radar CCA is successful as the starting time of other radar is earlier than $t_1$. However the subsequent packet~(solid blue color) transmission of the tagged radar is interfered by the the second chirp of the packet from the other radar. Observe that during interference one of the radar should be transmitting the first chirp in a packet and other must be transmitting chirp~$j,1< j\leq L$ in a packet.
\begin{figure}	
	\centering
	\includegraphics[width=5.5cm, height=2cm]{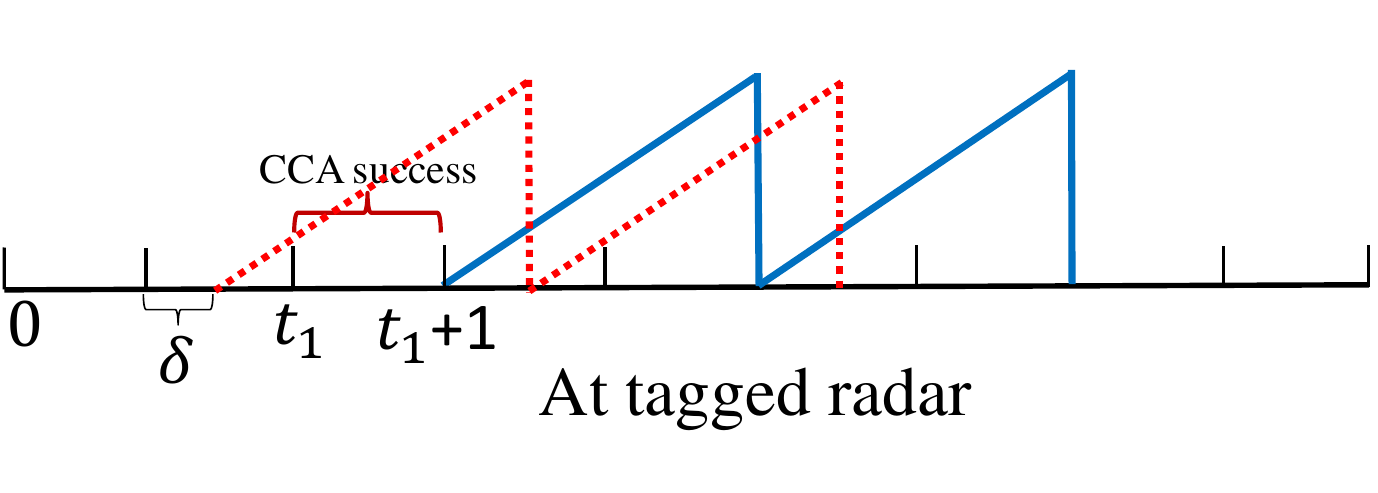}
	\caption{Illustrating interference at the tagged radar after CCA success in a collocated radar network for $L=2, K=2$}
	\label{fig:CSMA-collocated-L=2}	
\end{figure}
\par From simulation (see~Figure~\ref{fig:Async_Col_L4_Theta_vs_p}) we observe that CSMA's throughput is higher than that of slotted ALOHA for all values of $p$. We also observe that the throughput is a non-monotonic function of $p$.
\subsection{Non-collocated radars}\label{subsec:csma-non-coll} 
We consider a non-collocated and asynchronous radar network. In Section~\ref{subsubsec:collocated-L=1} we saw that in a collocated radar network radars do not suffer interference for $L=1$. But in non-collocated setting radars suffer interference for $L=1$ also. We explain this in the following using Figure~\ref{fig:CSMA-non-collocated-L=1}. Let $d$ be the distance between the tagged and interfering radars and $d<d_{I,\max}$. The tagged radar receives the packets transmitted by the interfering radar with a delay of $\frac{d}{c}$. Let $d = c\Delta$. This causes a delay of one slot as shown in Figure~\ref{fig:CSMA-non-collocated-L=1}. Let us assume that the tagged radar started CCA at slot $t_1$. In Figure~\ref{fig:CSMA-non-collocated-L=1} observe that though the interfering radar started a packet transmission during the CCA slot of the tagged radar, the tagged radar can not sense it because of the delay caused by the distance. The packet received at the tagged radar is shown in dotted lines (see Figure~\ref{fig:CSMA-non-collocated-L=1}a). The packet transmitted by the interfering radar reached the tagged radar after the CCA slot. So, the CCA was successful and the tagged radar started the packet transmission at slot $t_1+1$. The tagged radar received the packet during the first slot of packet transmission and suffered interference. 
\par Now consider another CCA slot at $t_2$. The tagged radar senses the packet from the other radar, it is a CCA failure and packet transmission did not start at slot $t_2+1.$
\begin{figure}	
	\centering
	\includegraphics[width=6cm, height=3.5cm]{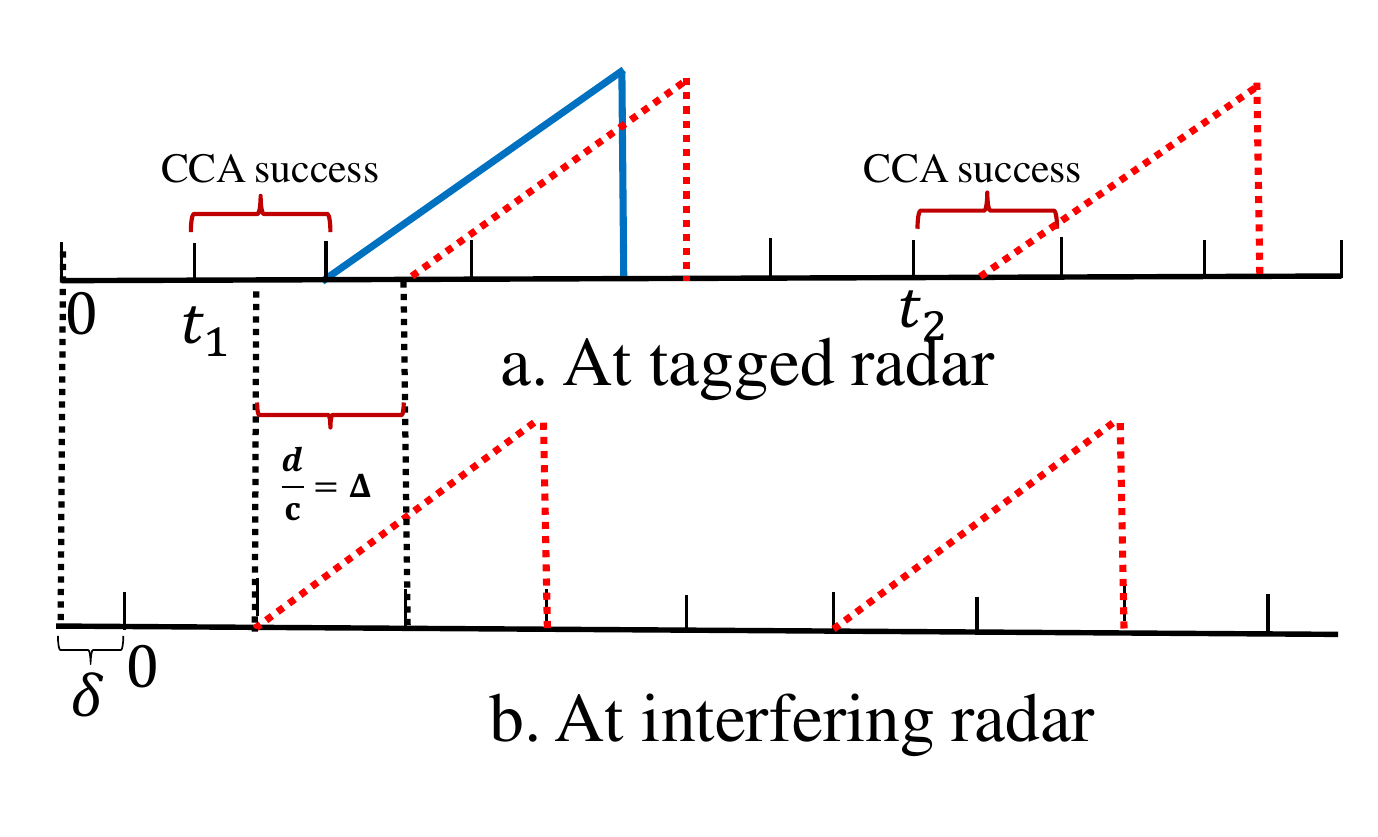}
	\caption{Illustrating interference in a non-collocated radar network for $L=1, K=2$, a. Transmitted and received packets at the tagged radar, b. Transmitted packets at the tagged radar}
	\label{fig:CSMA-non-collocated-L=1}	
\end{figure}
So, the CSMA avoids interference caused by packets that are received at the tagged radar in the CCA slot. 
From Section~\ref{sec:numerical_res} (see~Figure~\ref{fig:Non_Col_L1_Theta_vs_p}) we observe that for $L=1$ in CSMA the throughput increases in $p$, and for all values of $p$ throughput of CSMA is better than ALOHA.
\section{Numerical Results}\label{sec:numerical_res}
\par In this section, we simulate an FMCW radar network with $M$ radars. The radars' clocks are not synchronized.
Assume that there is a hypothetical global clock, the clock offsets for each radar is assumed to be uniformly distributed random variable in $[0, \Delta)$. More precisely,
we consider the cases of collocated and non-collocated radars. In the case of non-collocated radars, we assume that the radars are placed on a line segment $[0, d_{I,\max}]$. Radar locations are chosen from i.i.d. $ \text{Unifrom}[0, d_{I,\max}] $ distributions. We set $ d_{I,\max} = 4d_{\max} $. We assume that backoffs are sampled from Geo($ p $) distribution. 
\par We first verify the probability of interference~($p_I$) and throughput~($\Theta$) obtained in Proposition~\ref{prop:Prop_2} and Proposition~\ref{prop:theta}, respectively for different values of packet length~$(L) $, attempt probability~$(p)$ and number of radars~$(M)$ in ALOHA. Then we verify the fixed point solution obtained in Lemma~\ref{lemma:fixed-point} for collocated radar networks by varying $L,p,M$ in CSMA. Finally, we compare the performance of ALOHA and CSMA by varying $L,p,M$.
\par In Figure~\ref{fig:ALOHA_1} we demonstrate how $p_I$ and $\Theta$ vary with the attempt probability~$p$ and $L$ in ALOHA. Towards this we consider a network with $M=50$ radars with chirp length $K=40$. Further, we consider $L=1,4,5$.
\begin{figure}	
	\centering
	\includegraphics[width=8cm, height=3.65cm]{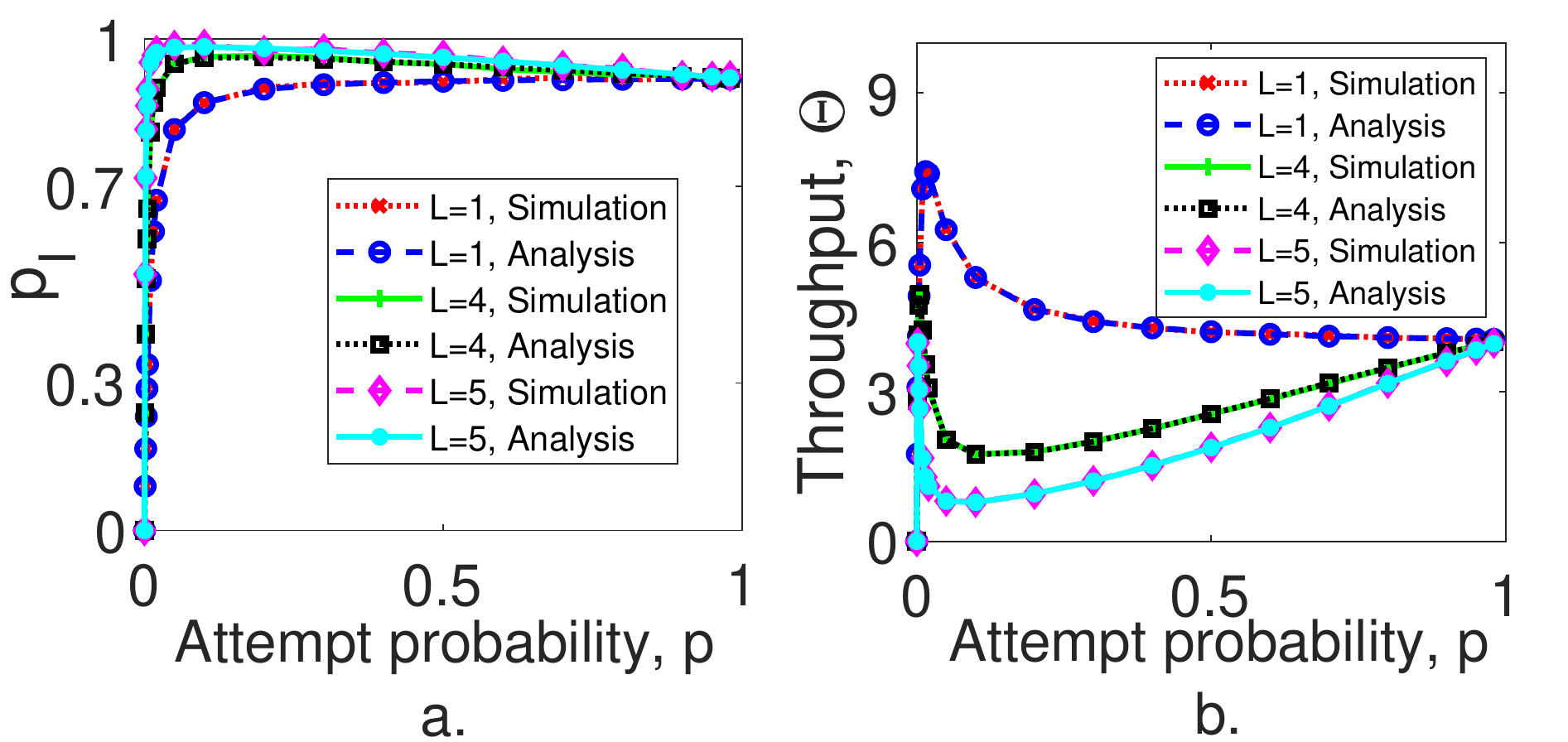}
	\caption{ Verifying the analytical results obtained in Propositions~\ref{prop:Prop_2} and \ref{prop:theta} for ALOHA,  a. Probability of interference vs attempt probability,   b. Throughput vs attempt probability for $M=50,K=40, L=1,4,5$}
	\label{fig:ALOHA_1}	
\end{figure}
\par We observe in Figure~\ref{fig:ALOHA_1} that $p_I, \Theta$ obtained from simulation and analysis match very closely~(less than $2$ percent error). Further, we also observe that at $p=1$, $p_I$ and $\Theta$ do not vary with $L$. This is because, at $p=1$, each radar generates a continuous train of packets with no backoff between them. We also observe that for $L>1$ interference probability increases and then decreases with $p$ which is different from wireless networks. Further, we find from Figure~\ref{fig:ALOHA_1}b that the maximum throughput for $L=1$ is achieved at $p_{opt} = \frac{1}{2M+1-K} = 0.016$~(see Section~\ref{rem:p_opt}).

\par In Figure~\ref{fig:fixexpt} we demonstrate the fixed point solution obtained in Lemma~\ref{lemma:fixed-point} for CSMA and compare it with the simulation. Towards this we consider a radar network with $M=50$ radars each using chirp length $K=40$ and $L=1$. In Figure~\ref{fig:fixexpt}a we plot fixed point solution for different attempt probabilities $p=0.2,0.4,0.6$. In Figure~\ref{fig:fixexpt}b we plot throughput obtained from fixed point analysis and the simulation by varying $p$ for $M=30,40,50$. We find that the error in throughput obtained from the fixed point analysis matches the simulation with less than $10$ percent error.
\begin{figure}	
	\centering
	\includegraphics[width=8cm, height=4cm]{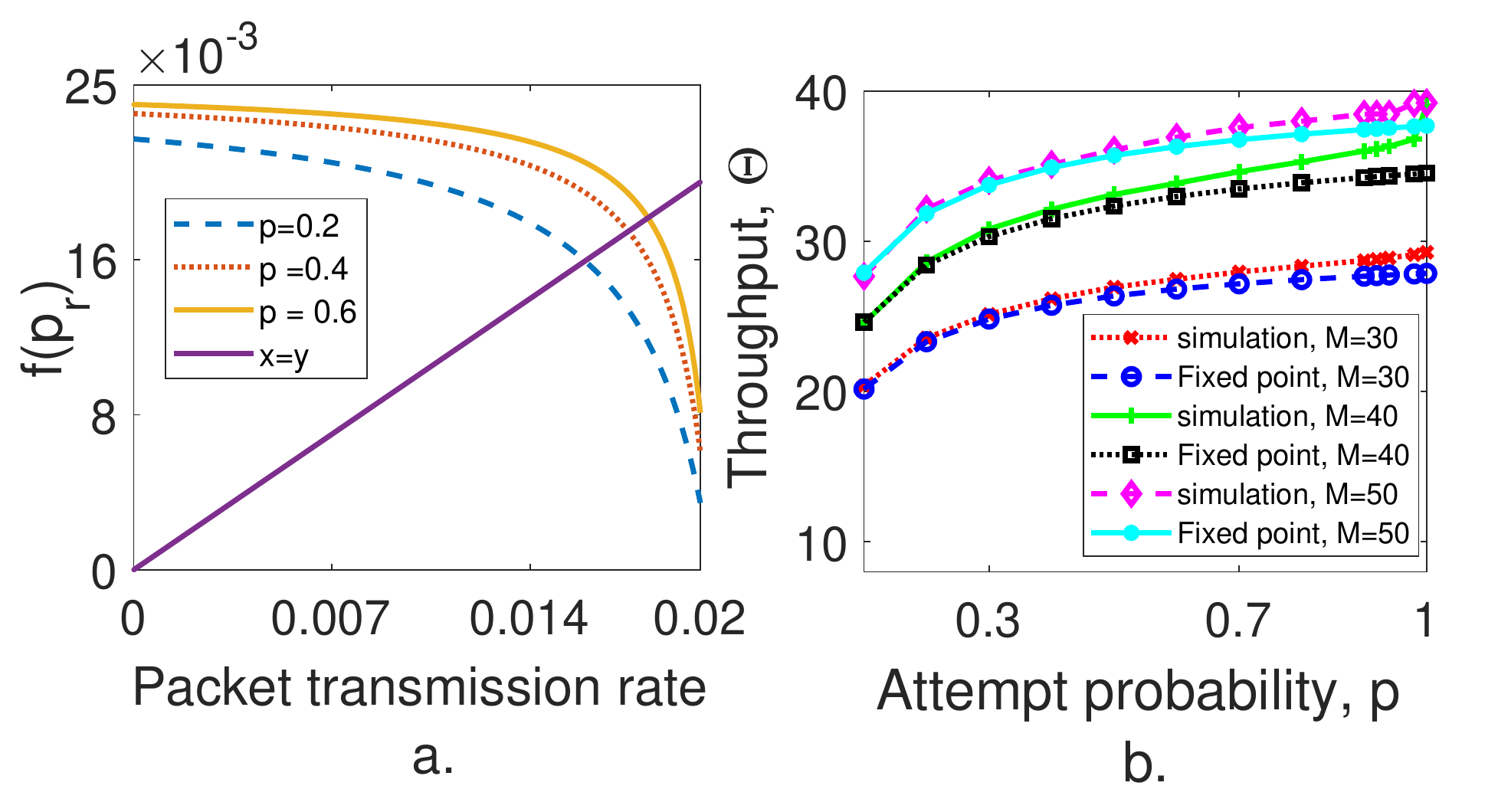}
	\caption{a. Fixed point analysis for $ M=50, K=40$, b. Throughput vs attempt probability for a collocated radar network $K=40, L=1$}	\label{fig:fixexpt}
\end{figure}

\par To compare the performance of ALOHA and CSMA we consider the following two network scenarios separately.
\begin{itemize}
	\item collocated and asynchronous radar network,
	\item non-collocated and asynchronous radar network. 
\end{itemize}

\subsection{Collocated and asynchronous radar network}\label{subsec:collocated-Async-sim}
In this section, we simulate a collocated radar network. We first consider the case where the number of chirps per packet $ L = 1 $. Subsequently, we consider $ L > 1 $ case.

\subsubsection {$ L =1 $}\label{subsub:coll-L=1}
\par To begin with we demonstrate how the throughput varies with $ p $. Towards this we fix $K=40$ and vary $M$ as shown in Figure~\ref{fig:Async_Col_L1_Theta_vs_p}. We observe that in CSMA the throughput increases in $p$. We observe that the network achieves maximum throughput at $p=1$. We explain this in the following. Recall from Section~\ref{subsubsec:collocated-L=1} that in a collocated and asynchronous network radars suffer no interference. At $p=1$ each radar continuously senses the medium until its CCA is successful i.e., the radar is either transmitting the packet or sensing the medium. Further, we observe that for $L=1$ the maximum throughput that can be achieved is $\min\{M,K\}$. We find that in Figure~\ref{fig:Async_Col_L1_Theta_vs_p} for $ M=30, 40, 50$ and $K=40$ CSMA achieves $29.3,39.1,39.25$ at $p=1$ respectively. 
In ALOHA the throughput decreases with the increase in the number of radars after maximum throughput. This is because each radar transmits packets irrespective of other radar packet transmissions. So, at a higher value of attempt probability every radar transmits packets often and suffers interference. Further, we observe that CSMA has better throughput than ALOHA. 
\begin{figure}	
	\centering
	\includegraphics[width=5.5cm, height=4cm]{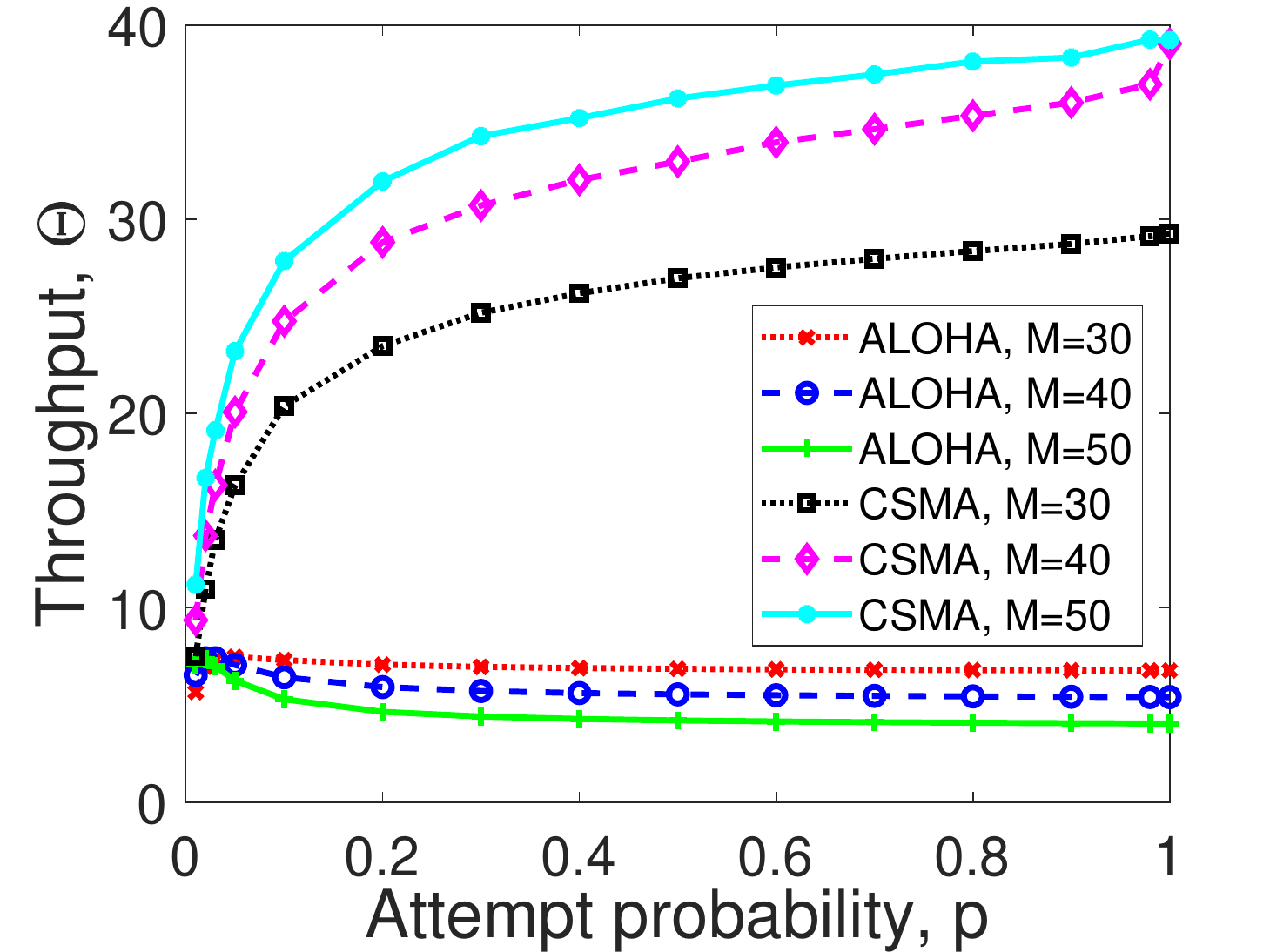}
	\caption{Throughput vs attempt probability for collocated and asynchronous radar network, $L=1, K=40, M = 30, 40, 50$}
	\label{fig:Async_Col_L1_Theta_vs_p}
\end{figure}
\par In Figure~\ref{fig:Async_Col_L1_OPT_Theta_N} we demonstrate how $p_{opt}$ and $\Theta_{opt}$ vary with $M$ for CSMA and ALOHA. Towards this, we fix $K=40$. From Figure~\ref{fig:Async_Col_L1_Theta_vs_p} we observed that in CSMA throughput increases in $p$, so observe $p_{opt}=1$ for all values of $M$ in Figure~\ref{fig:Async_Col_L1_OPT_Theta_N}a. Further, in Figure~\ref{fig:Async_Col_L1_OPT_Theta_N}b we observe that $\Theta_{opt}$ increases in $M$ in CSMA though the increase is small. In ALOHA we observe that $p_{opt}$ decreases with the increase in the number of radars. This is because for a fixed $p$, as the number of radars increase interference probability increases. So, radar has to transmit packets less often to avoid interference, i.e., $p$ has to decrease with the increase in $M$. From Figure~\ref{fig:Async_Col_L1_OPT_Theta_N}b we find that $\theta_{opt} $ decreases with $M$ in ALOHA though the decrease is small. Further, the optimum throughput achieved by CSMA is always more than the optimum throughput achieved by ALOHA.
\begin{figure}	
	\centering
	\includegraphics[width=8cm, height=3.7cm]{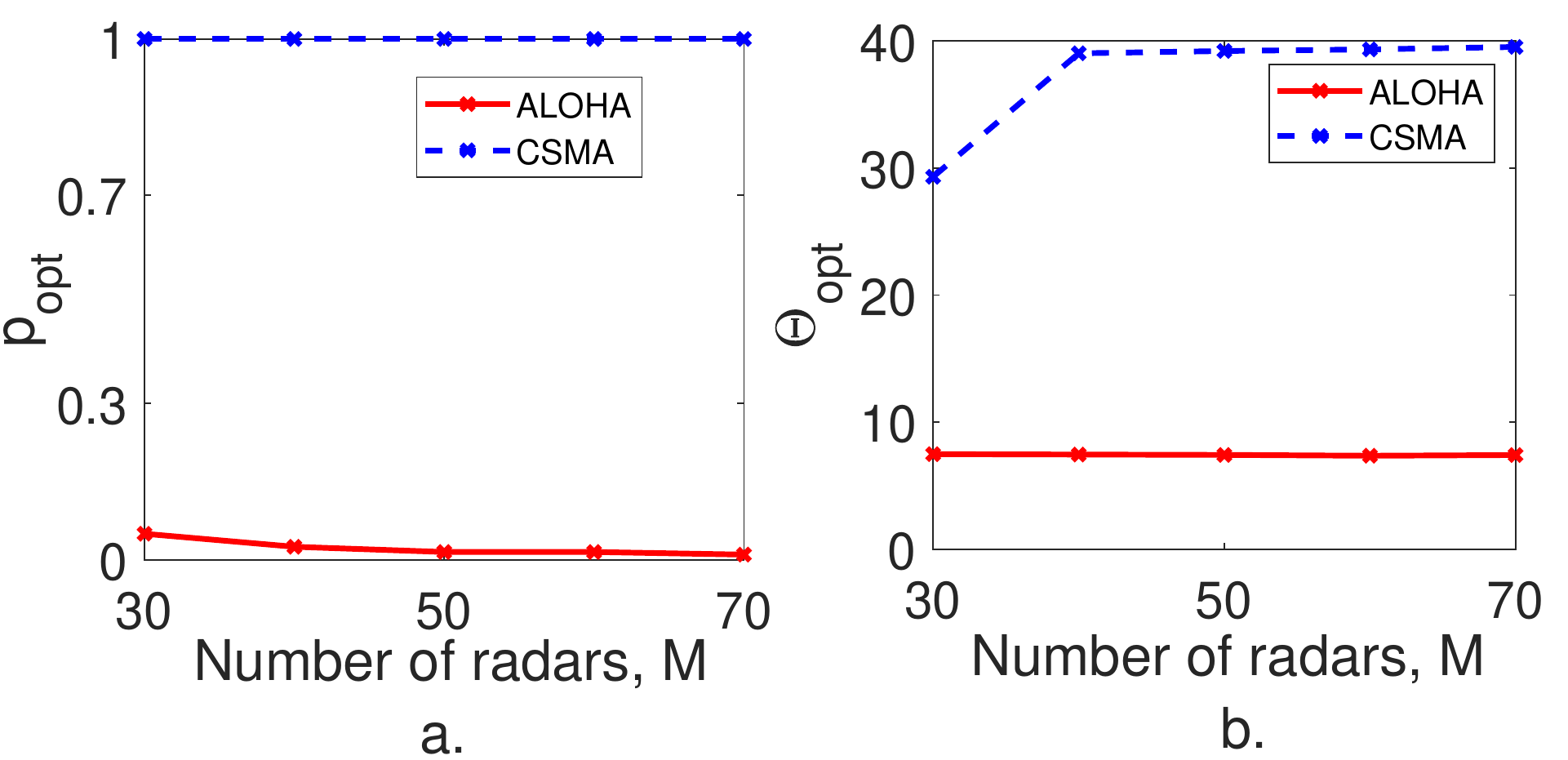}
	\caption{a. $ p_{opt}$ vs number of radars, b. $\Theta_{opt}$ vs number of radars for collocated and asynchronous radar network, $L=1, K=40$}
	\label{fig:Async_Col_L1_OPT_Theta_N}
\end{figure}
\par In Figure~\ref{fig:Async_Col_L1_OPT_Theta_KKK} we demonstrate how $p_{opt}$ and $\Theta_{opt}$ vary with $K$. Towards this we fix $M=40$. From Figure~\ref{fig:Async_Col_L1_Theta_vs_p} we observed $p_{opt}=1$ for CSMA. From Figure~\ref{fig:Async_Col_L1_OPT_Theta_KKK}b we observe that $\Theta_{opt}$ increases in $K$ though the increase is small. From Figure~\ref{fig:Async_Col_L1_OPT_Theta_KKK}b we observe that in ALOHA $p_{opt}$ increases in $K$. This is because for a fixed $p$ the probability of interference decreases with the increase in chirp length, $K$. So, we can increase $p$ to achieve optimum throughput.
\begin{figure}	
	\centering
	\includegraphics[width=8cm, height=3.9cm]{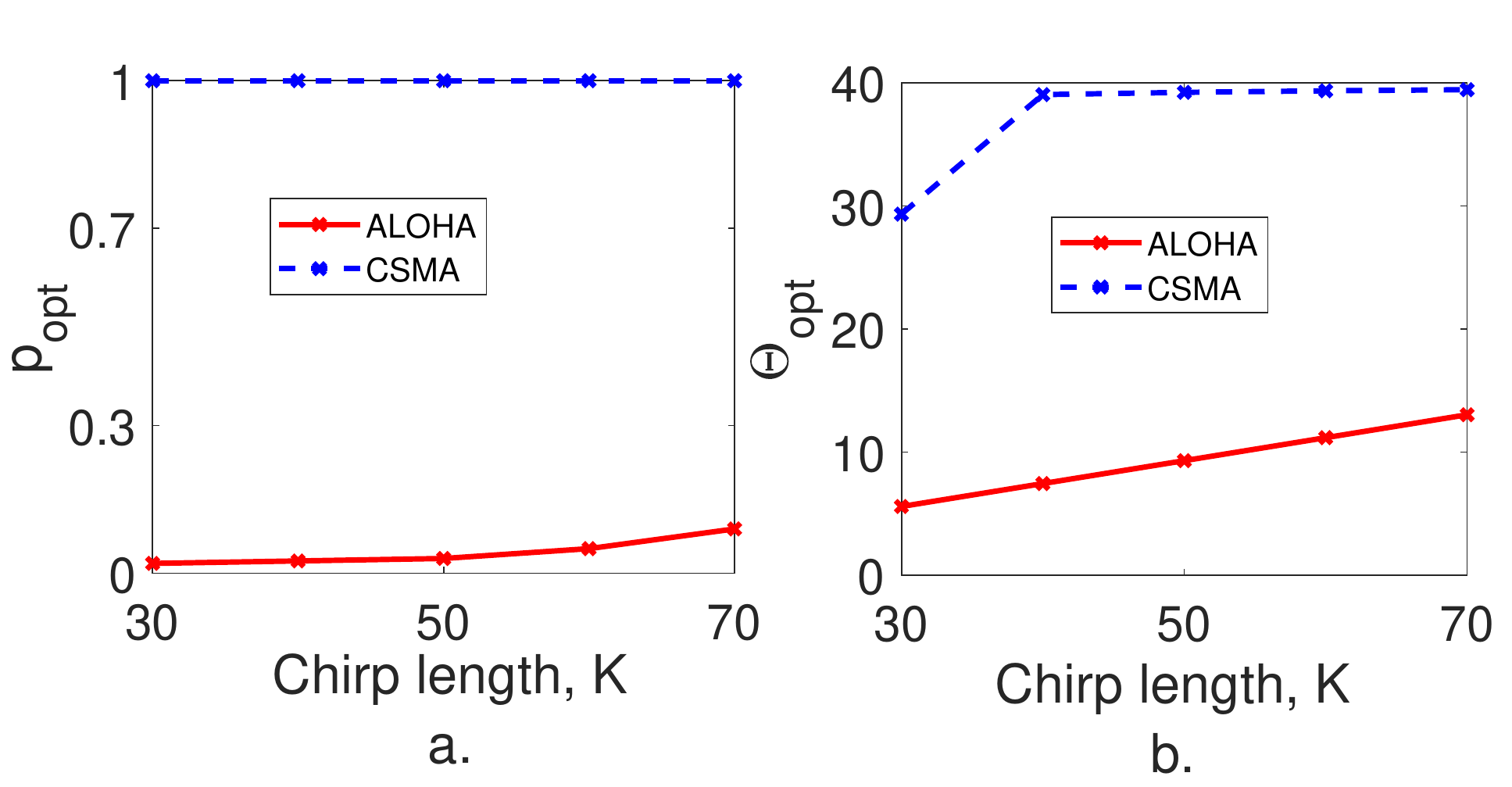}
	\caption{a. $ p_{opt}$ vs chirp length, b. $\Theta_{opt}$ vs chirp length for collocated and asynchronous radar network, $L=1, M=40$}
	\label{fig:Async_Col_L1_OPT_Theta_KKK}
\end{figure}
\subsubsection{$L>1$}
To begin with, we demonstrate how the throughput varies with $ p $. Towards this, we fix $K=40,L=4$ and vary $M$ as shown in Figure~\ref{fig:Async_Col_L4_Theta_vs_p}. In both the protocols we observe that throughput is less compared with $L=1$~(Figure~\ref{fig:Async_Col_L1_Theta_vs_p}). We explain this in the following. In CSMA for $L=1$ the radars suffer no interference (see Section~\ref{subsubsec:collocated-L=1}). For $L>1$ radars suffer interference and this deteriorates the throughput. In ALOHA for a given $p,M$, interference probability increases in $L$. So, the throughput is less compared to the case $L=1$. 
\begin{figure}	
	\centering
	\includegraphics[width=5.5cm, height=4cm]{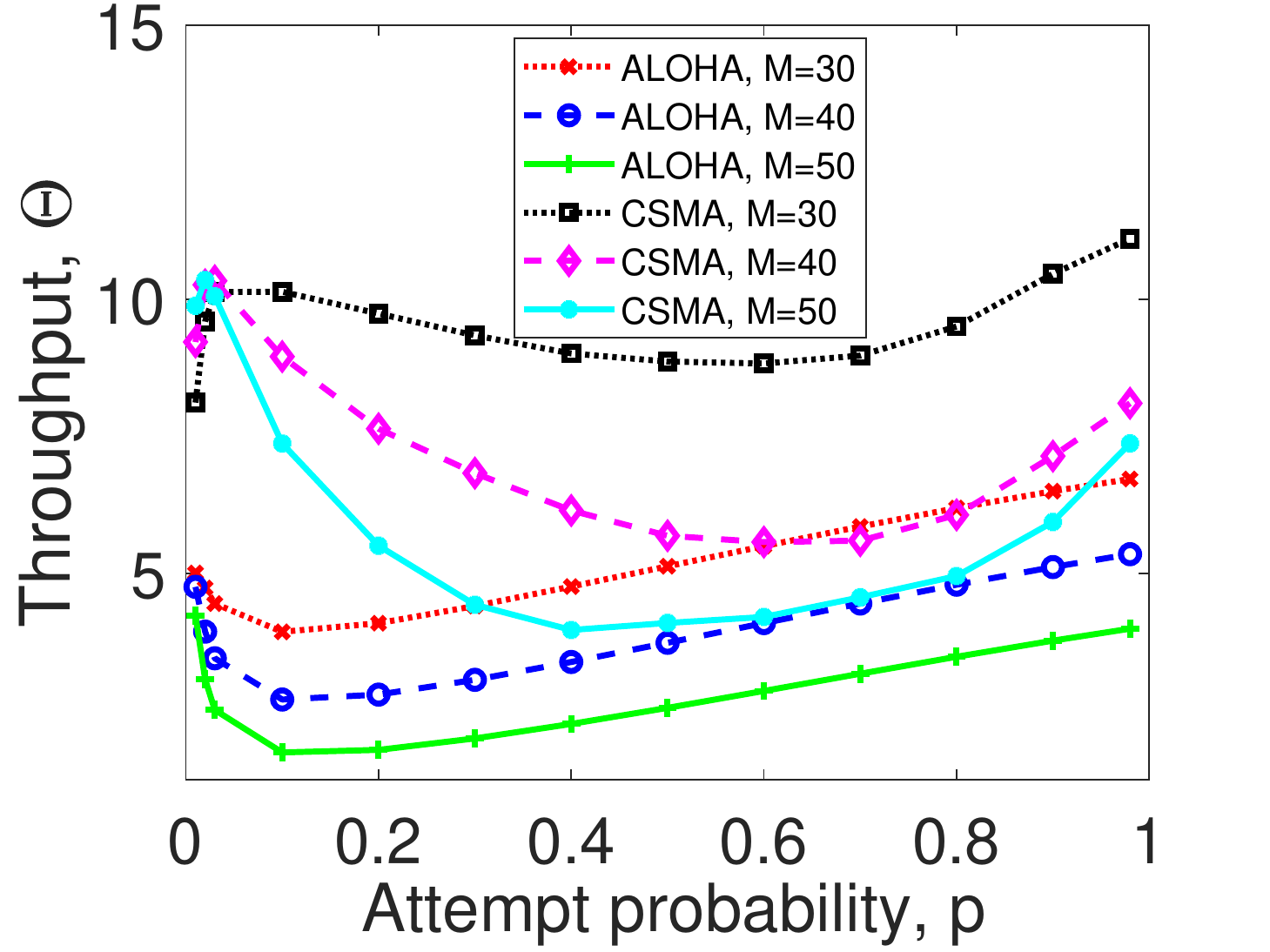}
	\caption{Throughput vs attempt probability for collocated and asynchronous radar network, $L=4, K=40, M = 30, 40, 50$
	}
	\label{fig:Async_Col_L4_Theta_vs_p}
\end{figure}
\par In Figure~\ref{fig:Async_Col_L4_OPT_Theta_N} we demonstrate how $p_{opt}$ and $\Theta_{opt}$ vary with $M$ in the network for CSMA and ALOHA. In Figure~\ref{fig:Async_Col_L4_OPT_Theta_N}a we observe that $p_{opt}$ decreases with the increase in $M$ in both CSMA and ALOHA. This is because, for a given $p$, the interference probability increases with the increase in $M$. So, the radar has to participate less aggressively in transmitting the packets. Further, in Figure~\ref{fig:Async_Col_L4_OPT_Theta_N}b we observe a decrease in throughput with the increase in $M$ for both CSMA and ALOHA. This is because for a fixed $p$ interference probability increases in $M$ so the throughput decreases. 
\begin{figure}	
	\centering
	\includegraphics[width=8cm, height=3.8cm]{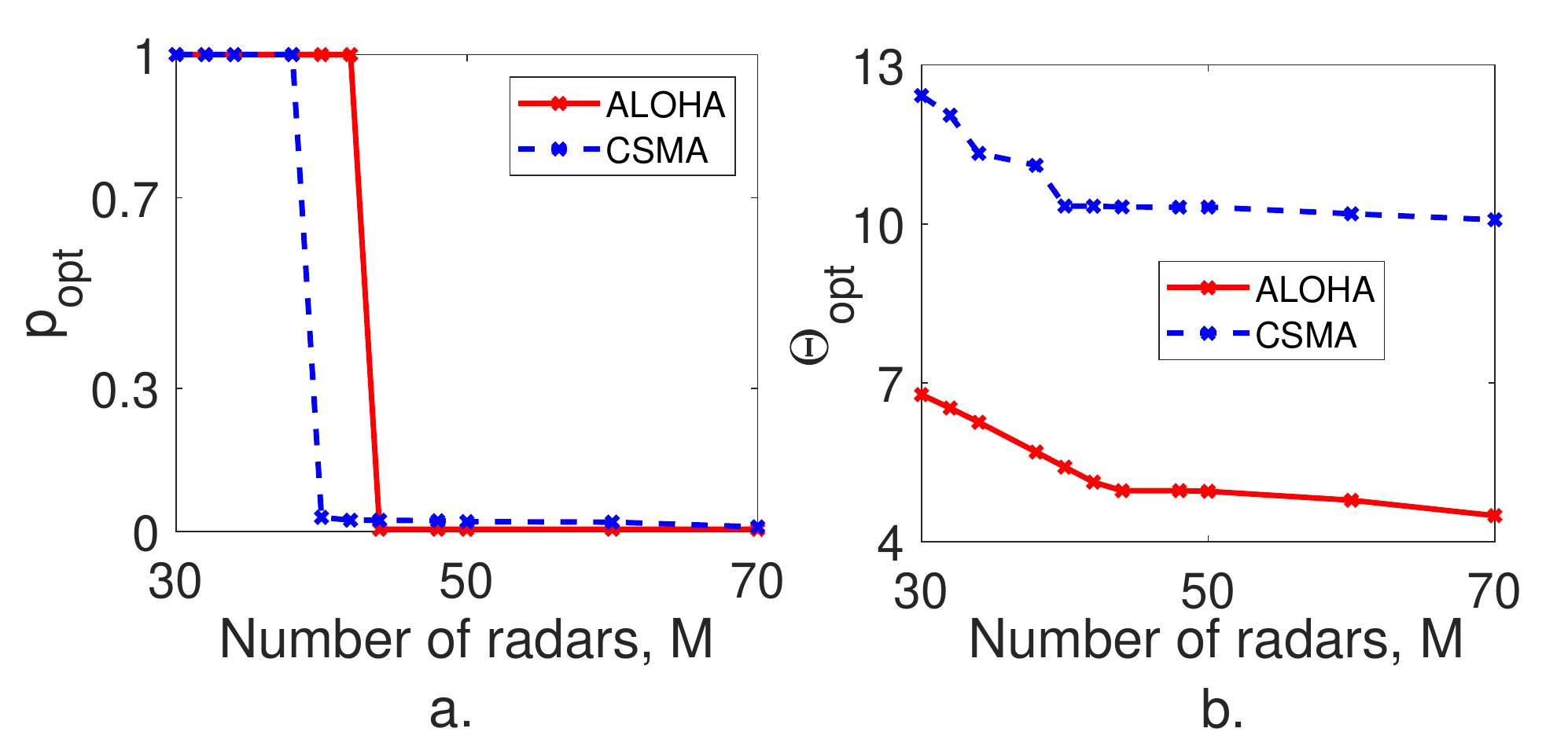}
	\caption{a. $ p_{opt}$ vs number of radars, b. $\Theta_{opt}$ vs number of radars for collocated and asynchronous radar network, $L=4, K=40$	
	}
	\label{fig:Async_Col_L4_OPT_Theta_N}
\end{figure}
\par In Figure~\ref{fig:Async_Col_L4_OPT_Theta_KKK} we demonstrate how $p_{opt}$ and $\Theta_{opt}$ vary with chirp length for CSMA and ALOHA. In Figure~\ref{fig:Async_Col_L4_OPT_Theta_KKK}a we observe that $p_{opt}$ increases in $K$ for both CSMA and ALOHA. This is because for a given $p,M$ interference probability decrease with chirp length, $K$. So, we can increase $p$ to achieve optimum throughput. In Figure~\ref{fig:Async_Col_L4_OPT_Theta_KKK}b we observe that $\Theta_{opt}$ increases in $K$ for both CSMA and ALOHA. 

\begin{figure}	
	\centering
	\includegraphics[width=8.75cm, height=3.8cm]{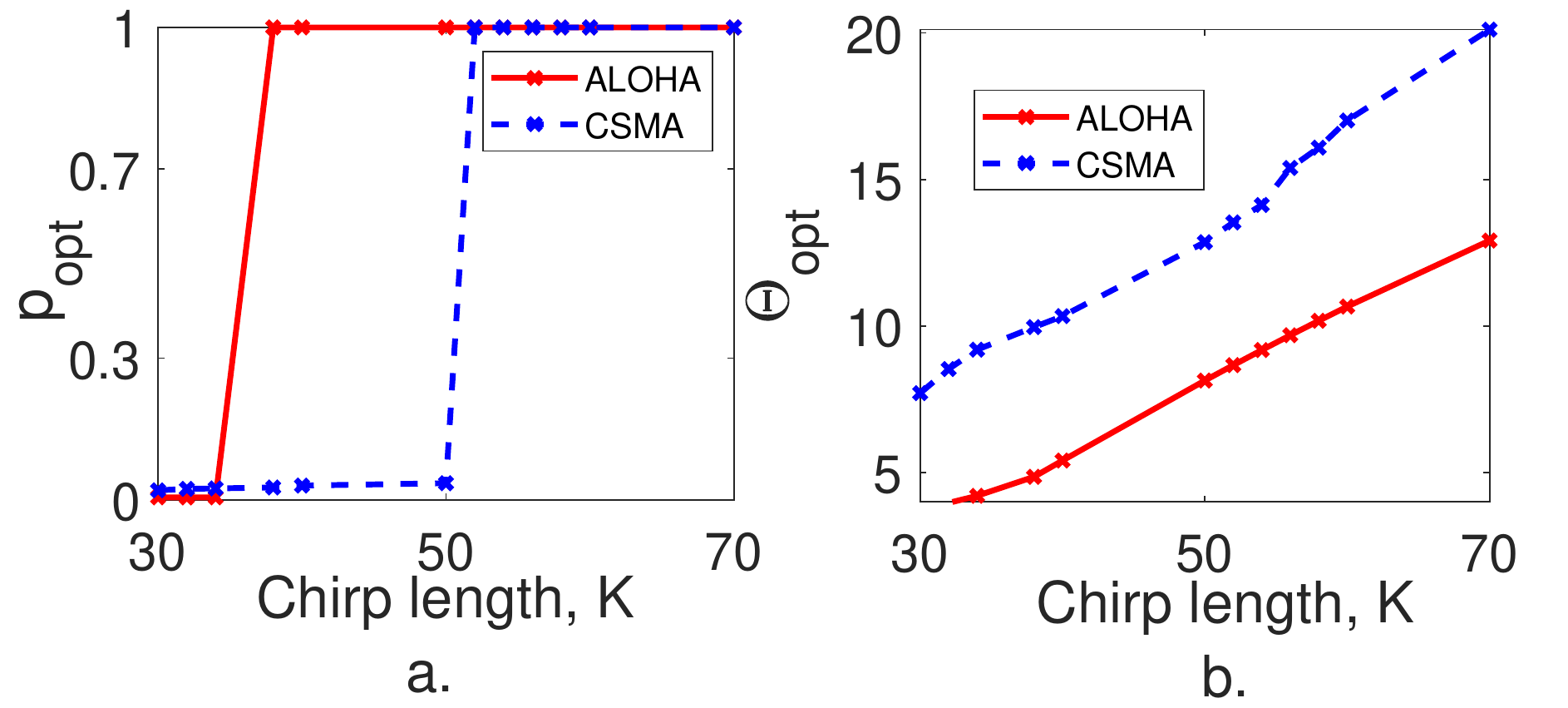}
	\caption{a. $ p_{opt}$ vs chirp length, b. $\Theta_{opt}$ vs chirp length for collocated and asynchronous radar network, $L=4, M=40,$}
	\label{fig:Async_Col_L4_OPT_Theta_KKK}
\end{figure}
\subsection{Non-collocated and asynchronous radar netwrok}
In this section, we simulate a collocated radar network. We first consider the case where the number of chirps per packet $ L = 1 $. Subsequently, we consider $ L > 1 $ case.
\subsubsection{$ L=1 $}
To begin with we demonstrate how the throughput varies with $ p $. Towards this we fix $K=40$ and vary $M$ as shown in Figure~\ref{fig:Non_Col_L1_Theta_vs_p}. We observe that in CSMA the throughput increases in $p$. However, it is less compared to the case $L=1$ collocated radar network at all the values of $p$~(see Figure~\ref{fig:Async_Col_L1_Theta_vs_p}). This is because in CSMA, the radars do not suffer interference when they are collocated but they suffer interference when they are not collocated. In ALOHA the throughputs are the same in collocated and non-collocated radar networks. This is because the radars do not sense the medium in ALOHA and each radar transmits packets irrespective of other radars' transmissions.
\begin{figure}	
	\centering
	\includegraphics[width=5.5cm, height=4cm]{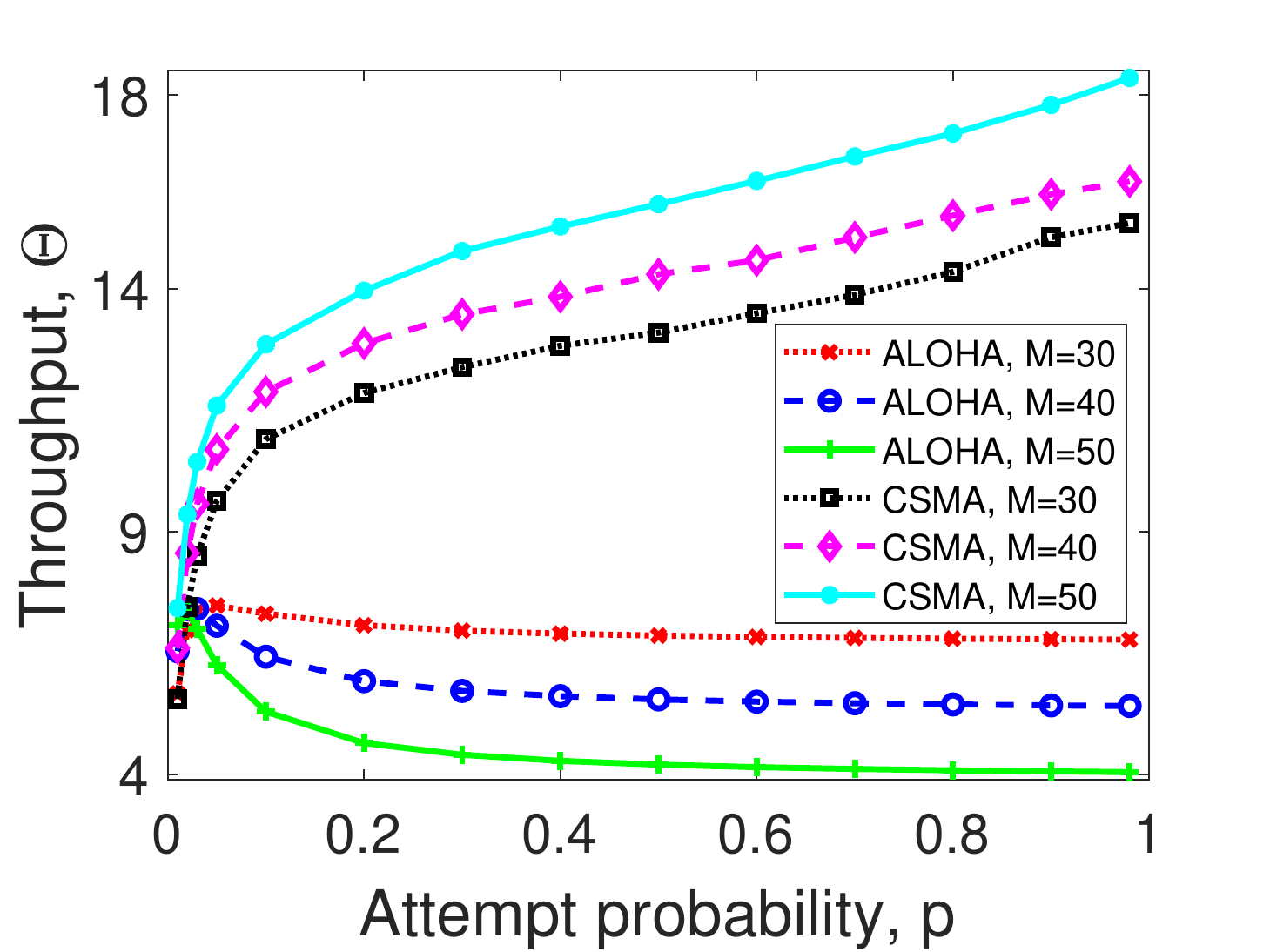}
	\caption{Throughput vs attempt probability for non-collocated and asynchronous radar network, $L=1, K=40, M = 30, 40, 50$
	}
	\label{fig:Non_Col_L1_Theta_vs_p}
\end{figure}

\par In Figure~\ref{fig:Non_Col_L1_OptTheta_vs_N} we demonstrate how $p_{opt}$ and $\Theta_{opt}$ vary with $M$ in CSMA and ALOHA. In Figure~\ref{fig:Non_Col_L1_Theta_vs_p} we observed that in CSMA throughput increases in $p$. So, in Figure~\ref{fig:Non_Col_L1_OptTheta_vs_N}a we observe $p_{opt}=1$ for CSMA. In Figure~\ref{fig:Non_Col_L1_OptTheta_vs_N}b we observe that throughput increases in $ M $, in CSMA. In ALOHA, given network parameters $L,K,M,p$ throughputs are the same for collocated and non-collocated networks. So, the same observations made for collocated network in ALOHA apply here~(see Section~\ref{subsub:coll-L=1}).
\begin{figure}	
	\centering
	\includegraphics[width=8cm, height=3.8cm]{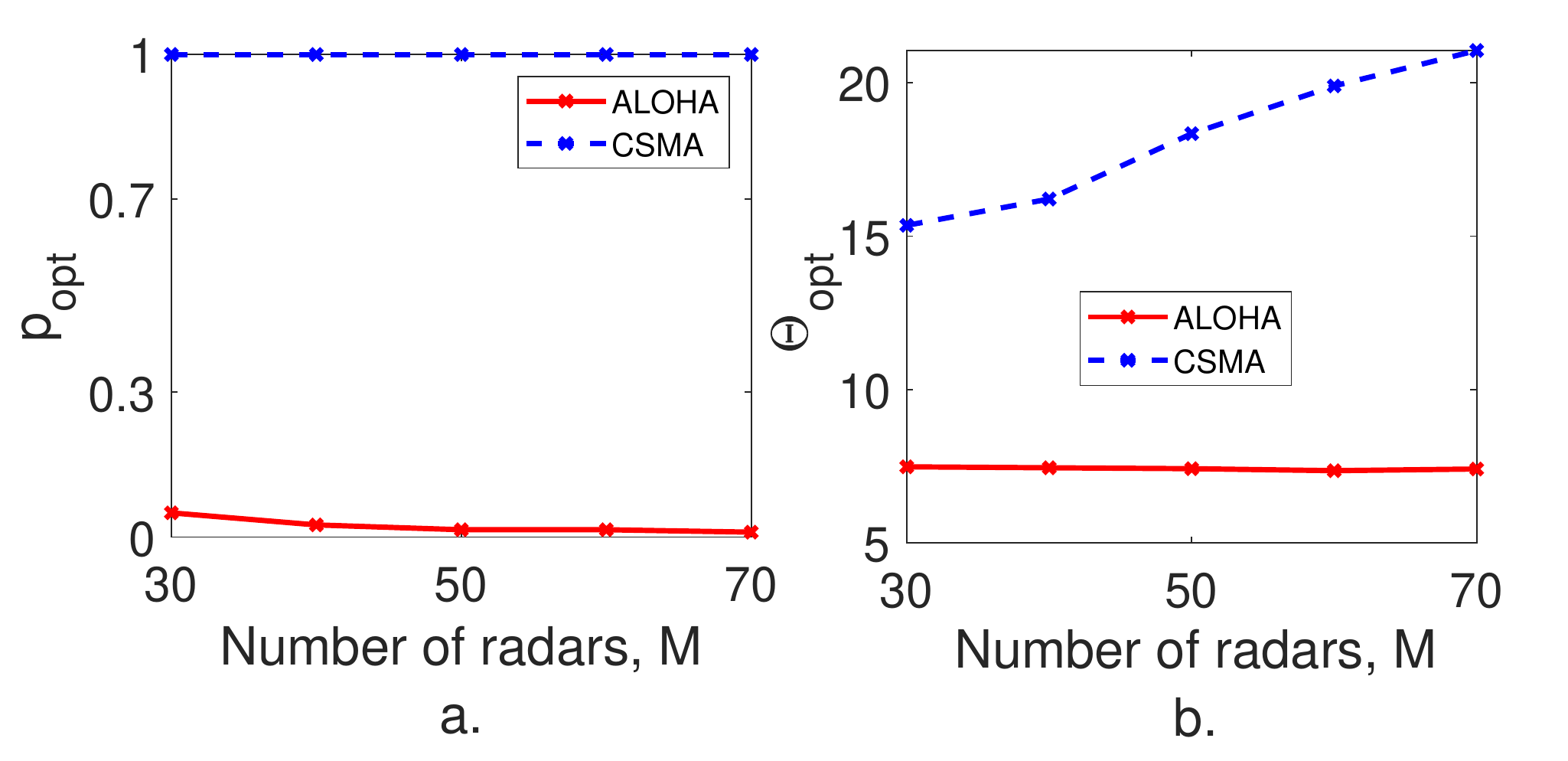}
	\caption{a. $ p_{opt}$ vs number of radars, b. $\Theta_{opt}$ vs number of radars for non-collocated and asynchronous radar network, $L=1, K=40$}
	\label{fig:Non_Col_L1_OptTheta_vs_N}
\end{figure}
\par In Figure~\ref{fig:Non_Col_L1_OptTheta_vs_K} we demonstrate how $p_{opt}$ and $\Theta_{opt}$ vary with chirp length in CSMA and ALOHA. In Figure~\ref{fig:Non_Col_L1_OptTheta_vs_K}a we observe that $p_{opt}=1$ for CSMA. In Figure~\ref{fig:Non_Col_L1_OptTheta_vs_K}b we observe that throughput increases in $K$ in CSMA. This is because the probability of interference decreases with the increase in $K$. So, the throughput increases.
\begin{figure}	
	\centering
	\includegraphics[width=8cm, height=3.8cm]{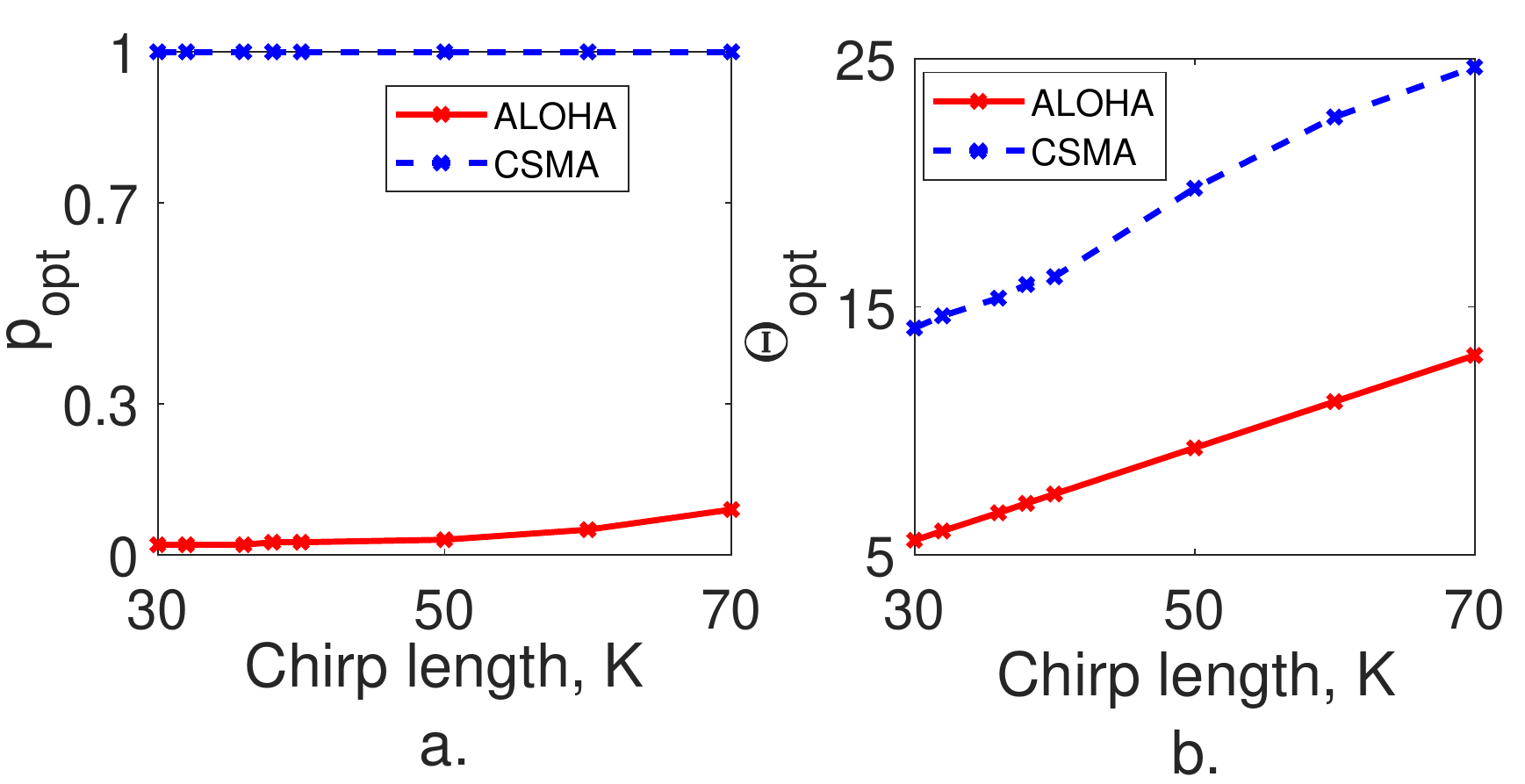}
	\caption{a. $ p_{opt}$ vs chirp length, b. $\Theta_{opt}$ vs chirp length for non-collocated and asynchronous radar network, $L=1, M=40,$}
	\label{fig:Non_Col_L1_OptTheta_vs_K}
\end{figure}
\subsubsection{$ L>1 $}
To begin with, we demonstrate how the throughput varies with $ p $. Towards this, we fix $K=40,L=4$ and vary $M$ as shown in Figure~\ref{fig:Non_Col_L4_Theta_vs_p}. In Figure~\ref{fig:Non_Col_L4_Theta_vs_p} we observe that the throughput in CSMA is less compared to the case of collocated radar network $L=4$~at all the values of $p$(see Figure~\ref{fig:Async_Col_L4_Theta_vs_p}). This is because the probability of interference increases in $L$. So, the throughput decreases. In ALOHA, given network parameters, $L,K,M,p$ throughputs are the same for collocated and non-collocated networks. So, the same observations made for collocated networks in ALOHA are hold here~(see Section~\ref{subsubsec:coll-L>1}.
\begin{figure}	
	\centering
	\includegraphics[width=5.5cm, height=4cm]{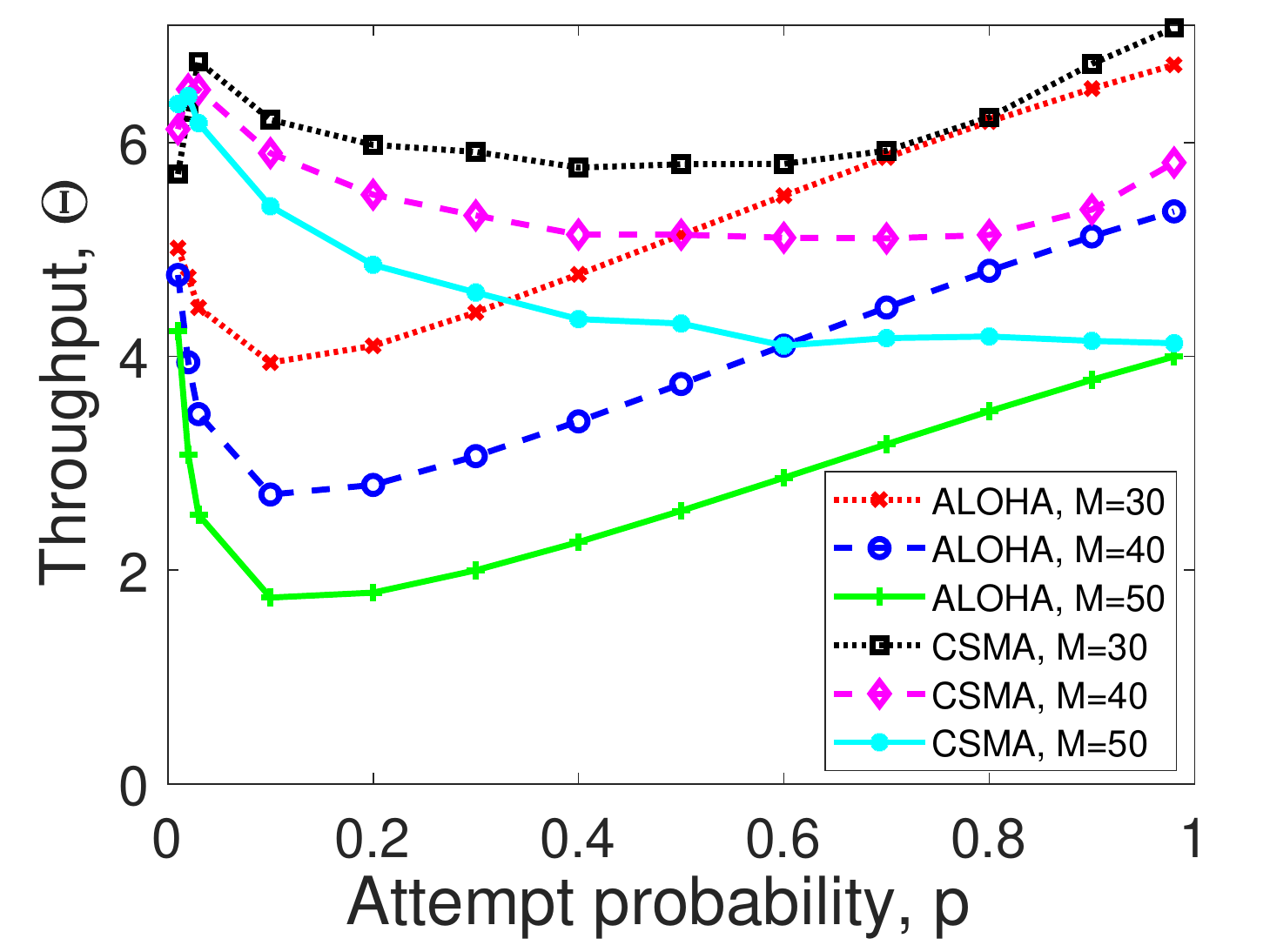}
	\caption{Throughput of slotted CSMA and slotted ALOHA for non-collocated radar network with asynchronous clock, $L=4, K=40, M = 30, 40, 50$}
	\label{fig:Non_Col_L4_Theta_vs_p}
\end{figure}
\par In Figure~\ref{fig:Non_Col_L4_OptTheta_vs_N} we demonstrate how $p_{opt}$ and $\Theta_{opt}$ vary with $M$ in CSMA and ALOHA. In Figure~\ref{fig:Non_Col_L4_OptTheta_vs_N}a we observe that $p_{opt}$ decreases with $M$ in both CSMA and ALOHA. This is because as the number of radars increases in the network the interference probability also increases. So, the radar has to participate less aggressively in transmitting the packets. Further, in Figure~\ref{fig:Async_Col_L4_OPT_Theta_N}b we observe a decrease in throughput with the increase in the number of radars, $M$ for both CSMA and ALOHA. This is because for a given $p$ the interference probability increases in $M$. So, the throughput decreases. 
\begin{figure}	
	\centering
	\includegraphics[width=7.5cm, height=4cm]{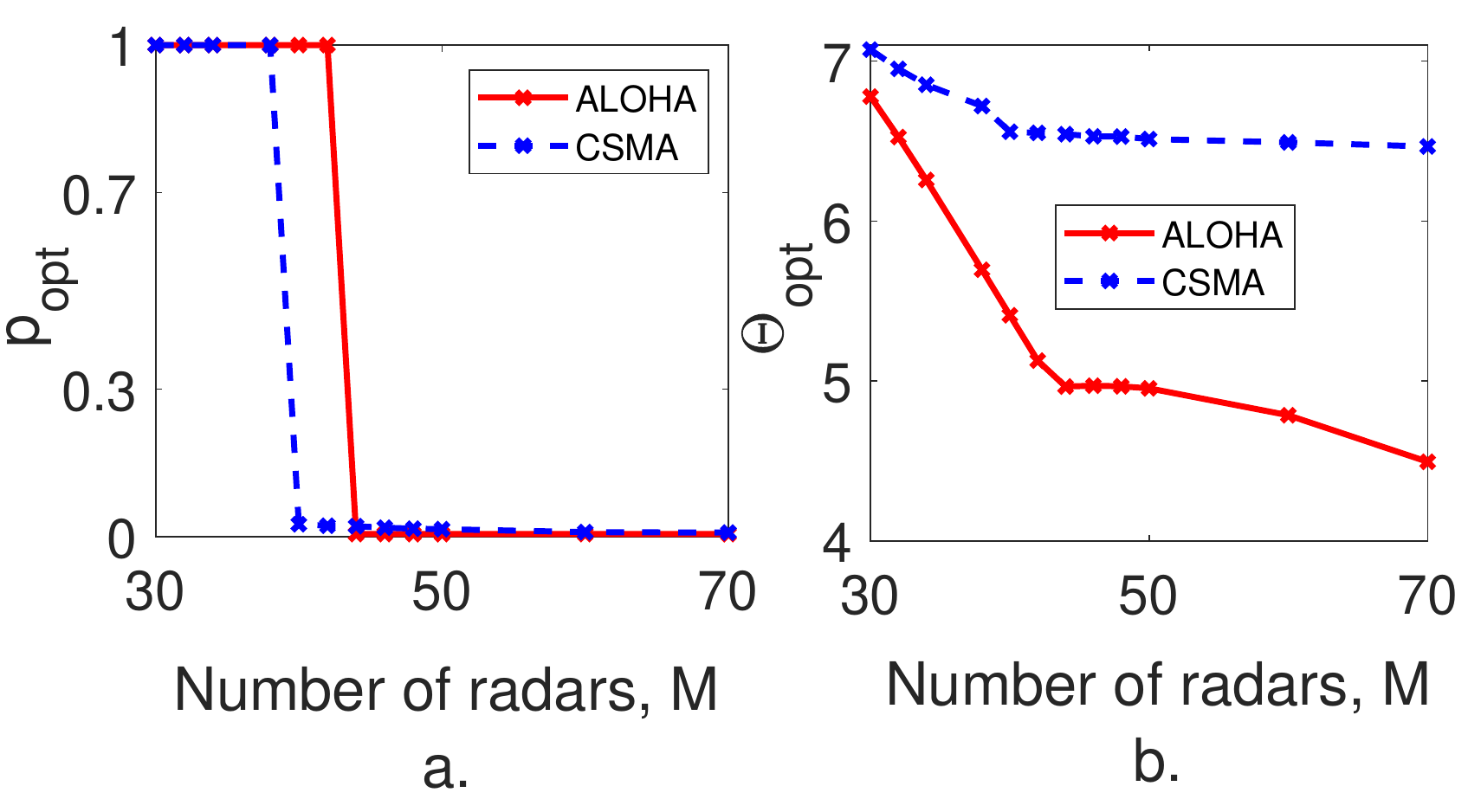}
	\caption{a. $ p_{opt}$ vs number of radars, b. $\Theta_{opt}$ vs number of radars for non-collocated and asynchronous radar network, $L=4, K=40$}
	\label{fig:Non_Col_L4_OptTheta_vs_N}
	\end{figure}
\par In Figure~\ref{fig:Non_Col_L4_OptTheta_vs_K} we demonstrate how $p_{opt}$ and $\Theta_{opt}$ vary with chirp length for CSMA and ALOHA. In Figure~\ref{fig:Non_Col_L4_OptTheta_vs_K}a we observe that $p_{opt}$ increases in $K$ for both CSMA and ALOHA. This is because, for a given $p,M$ interference probability decreases with the increase in $K$. So, radars can aggressively participate in the packet transmission and we can improve the throughput. So, in Figure~\ref{fig:Non_Col_L4_OptTheta_vs_K}b we observe the increase in $\Theta_{opt}$ with chirp length for both CSMA and ALOHA. 
\begin{figure}	
	\centering
	\includegraphics[width=8cm, height=3.7cm]{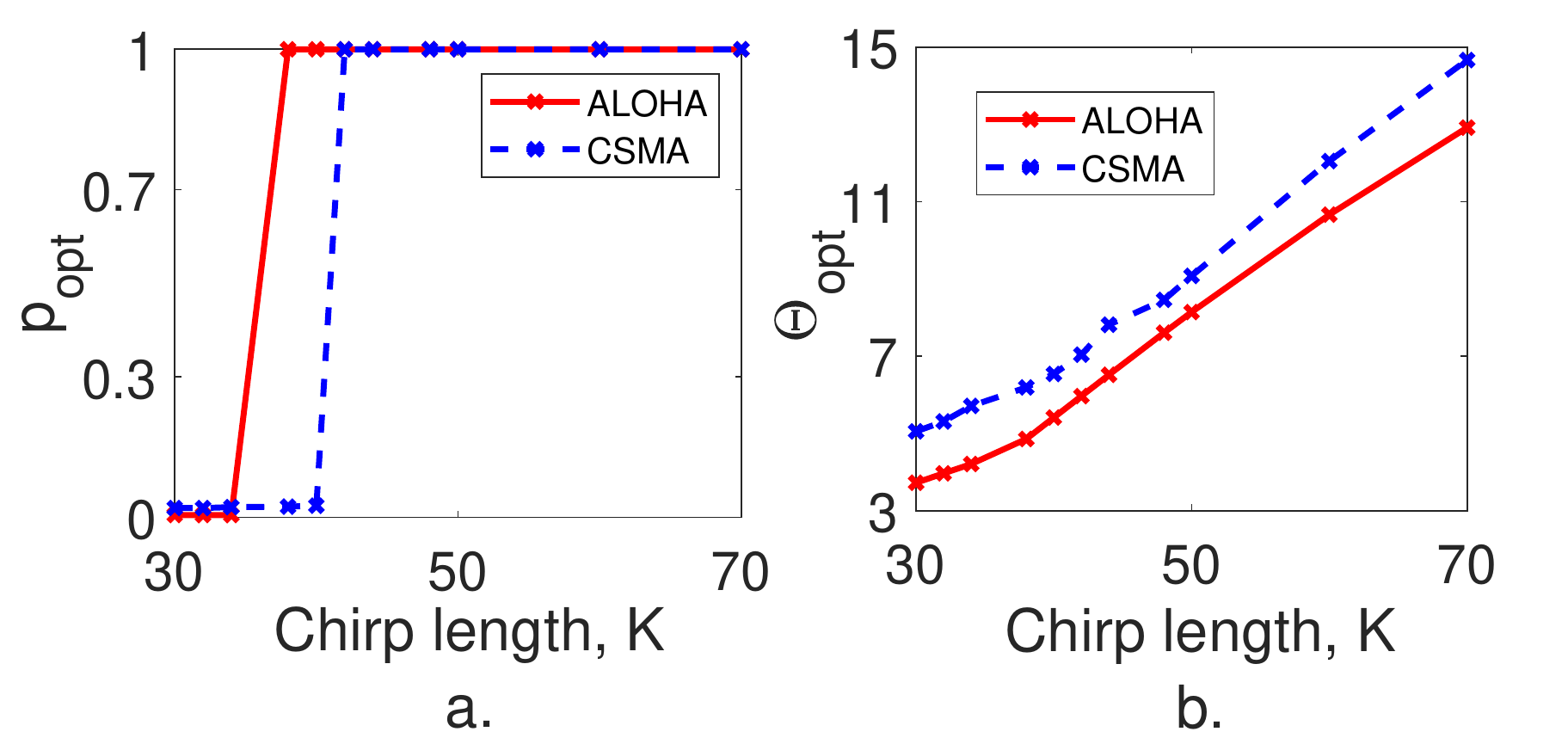}
	\caption{a. $ p_{opt}$ vs chirp length, b. $\Theta_{opt}$ vs chirp length for non-collocated and asynchronous radar network, $L=4, M=40$}
	\label{fig:Non_Col_L4_OptTheta_vs_K}
\end{figure}
\section{Conclusion}
We studied medium access in FMCW radar networks in which all the radars used the same parameters, e.g., chirp duration, chirp slope, cutoff frequency, number of chirps per packet, etc. We proposed and analyzed slotted ALOHA and CSMA protocols in terms of interference probability and throughput. In either case, we observed that interference probability and throughput may behave differently than in wireless communication networks. For 
instance, in the case of ALOHA, if the number of chirps per packet is larger than one, the interference probabilities may be smaller for higher transmission rates (see Figure~\ref{fig:ALOHA_1}a) and throughputs may be maximum at the highest possible transmission rates (see Figure~\ref{fig:ALOHA_1}b). In the case of CSMA also, using the highest possible attempt rates may maximize throughput (see Figure~\ref{fig:Async_Col_L1_Theta_vs_p}). We observed that CSMA outperformed ALOHA in all the realistic scenarios. 

If the radars in a FMCW network use different chirp slopes, they may also be subject to wideband interference from other radars. Designing medium access protocols that also consider wideband interference is a challenging open problem. Usually, the received signals at the radars undergo further processing for estimation of object locations, velocities, etc (see \cite{vtc_radar}). The notions of interference probability and throughput can be refined to account for the detection and estimation errors. Designing medium access protocols that optimize the refined notion of throughput is another potential future direction.

\bibliography{TSLS}

\begin{thebibliography}{10}
\expandafter\ifx\csname url\endcsname\relax
  \def\url#1{\texttt{#1}}\fi
\expandafter\ifx\csname urlprefix\endcsname\relax\def\urlprefix{URL }\fi
\expandafter\ifx\csname href\endcsname\relax
  \def\href#1#2{#2} \def\path#1{#1}\fi

\bibitem{TI_1}
TI,
  \href{https://training.ti.com/intro-mmwave-sensing-fmcw-radars-module-1-range-estimation}{Introduction
  to mmwave sensing:{FMCW} radars}.
\newline\urlprefix\url{https://training.ti.com/intro-mmwave-sensing-fmcw-radars-module-1-range-estimation}

\bibitem{TI_2}
TI,
  \href{https://training.ti.com/epd-pro-rap-mmwaveradar-adh-tr-webinar-eu}{mmwave
  radar for automotive and industrial applications}.
\newline\urlprefix\url{https://training.ti.com/epd-pro-rap-mmwaveradar-adh-tr-webinar-eu}

\bibitem{Inter_Mit_Aydogdu}
C.~Aydogdu, M.~F. Keskin, G.~K. Carvajal, O.~Eriksson, H.~Hellsten,
  H.~Herbertsson, E.~Nilsson, M.~Rydstrom, K.~Vanas, H.~Wymeersch, Radar
  interference mitigation for automated driving: Exploring proactive
  strategies, IEEE Signal Processing Magazine 37~(4) (2020) 72--84.
\newblock \href {http://dx.doi.org/10.1109/MSP.2020.2969319}
  {\path{doi:10.1109/MSP.2020.2969319}}.

\bibitem{Interference_Rao}
S.~Rao, A.~V. Mani, Interference characterization in {FMCW} radars, in: 2020
  IEEE Radar Conference (RadarConf20), 2020, pp. 1--6.
\newblock \href {http://dx.doi.org/10.1109/RadarConf2043947.2020.9266283}
  {\path{doi:10.1109/RadarConf2043947.2020.9266283}}.

\bibitem{Sun}
S.~Sun, A.~P. Petropulu, H.~V. Poor, {MIMO} radar for advanced
  driver-assistance systems and autonomous driving: Advantages and challenges,
  IEEE Signal Processing Magazine 37~(4) (2020) 98--117.
\newblock \href {http://dx.doi.org/10.1109/MSP.2020.2978507}
  {\path{doi:10.1109/MSP.2020.2978507}}.

\bibitem{vtc_radar}
S.~Jin, S.~Roy, Cross-layer interference modeling and performance analysis in
  {FMCW} radar multiple access network, in: 2020 IEEE 92nd Vehicular Technology
  Conference (VTC2020-Fall), 2020, pp. 1--6.
\newblock \href {http://dx.doi.org/10.1109/VTC2020-Fall49728.2020.9348730}
  {\path{doi:10.1109/VTC2020-Fall49728.2020.9348730}}.

\bibitem{csma_radar}
S.~{Ishikawa}, M.~{Kurosawa}, M.~{Umehira}, X.~{Wang}, S.~{Takeda},
  H.~{Kuroda}, Packet-based {FMCW} radar using {CSMA} technique to avoid
  narrowband interefrence, in: 2019 International Radar Conference (RADAR),
  2019, pp. 1--5.
\newblock \href {http://dx.doi.org/10.1109/RADAR41533.2019.171379}
  {\path{doi:10.1109/RADAR41533.2019.171379}}.

\bibitem{Inter_Replica_Ammen}
D.~Ammen, M.~Umehira, X.~Wang, S.~Takeda, H.~Kuroda, A ghost target suppression
  technique using interference replica for automotive {FMCW} radars, in: 2020
  IEEE Radar Conference, 2020, pp. 1--5.
\newblock \href {http://dx.doi.org/10.1109/RadarConf2043947.2020.9266514}
  {\path{doi:10.1109/RadarConf2043947.2020.9266514}}.

\bibitem{radar_chat}
C.~{Aydogdu}, M.~F. {Keskin}, N.~{Garcia}, H.~{Wymeersch}, D.~W. {Bliss},
  Rad{C}hat: Spectrum sharing for automotive radar interference mitigation,
  IEEE Transactions on Intelligent Transportation Systems 22~(1) (2021)
  416--429.
\newblock \href {http://dx.doi.org/10.1109/TITS.2019.2959881}
  {\path{doi:10.1109/TITS.2019.2959881}}.

\bibitem{Jin_Journal}
S.~Jin, S.~Roy, {FMCW} radar network: Multiple access and interference
  mitigation, IEEE Journal of Selected Topics in Signal Processing 15~(4)
  (2021) 968--979.
\newblock \href {http://dx.doi.org/10.1109/JSTSP.2021.3071565}
  {\path{doi:10.1109/JSTSP.2021.3071565}}.

\bibitem{FH_Luo}
T.~N. Luo, C.-H.~E. Wu, Y.-J.~E. Chen, A 77-{GH}z {CMOS} automotive radar
  transceiver with anti-interference function, IEEE Transactions on Circuits
  and Systems I: Regular Papers 60~(12) (2013) 3247--3255.
\newblock \href {http://dx.doi.org/10.1109/TCSI.2013.2265974}
  {\path{doi:10.1109/TCSI.2013.2265974}}.

\bibitem{radar_mac}
J.~Khoury, R.~Ramanathan, D.~McCloskey, R.~Smith, T.~Campbell, Radar{MAC}:
  Mitigating radar interference in self-driving cars, in: 2016 13th Annual IEEE
  International Conference on Sensing, Communication, and Networking (SECON),
  2016, pp. 1--9.
\newblock \href {http://dx.doi.org/10.1109/SAHCN.2016.7733011}
  {\path{doi:10.1109/SAHCN.2016.7733011}}.

\bibitem{Centra_Mazher}
K.~U. Mazher, R.~W. Heath, K.~Gulati, J.~Li, Automotive radar interference
  characterization and reduction by partial coordination, in: 2020 IEEE Radar
  Conference (RadarConf20), 2020, pp. 1--6.
\newblock \href {http://dx.doi.org/10.1109/RadarConf2043947.2020.9266425}
  {\path{doi:10.1109/RadarConf2043947.2020.9266425}}.

\bibitem{Son}
Y.~S. Son, H.~K. Sung, S.~W. Heo, Automotive frequency modulated continuous
  wave radar interference reduction using per-vehicle chirp sequences, Sensors
  15~(3) (2018) 1--11.
\newblock \href {http://dx.doi.org/10.3390/s18092831}
  {\path{doi:10.3390/s18092831}}.

\bibitem{stoch_geometry}
A.~{Al-Hourani}, R.~J. {Evans}, S.~{Kandeepan}, B.~{Moran}, H.~{Eltom},
  Stochastic geometry methods for modeling automotive radar interference, IEEE
  Transactions on Intelligent Transportation Systems 19~(2) (2018) 333--344.
\newblock \href {http://dx.doi.org/10.1109/TITS.2016.2632309}
  {\path{doi:10.1109/TITS.2016.2632309}}.

\bibitem{AK}
A.~Kumar, \href{https://ece.iisc.ac.in/anurag/index.php/books/}{Discrete event
  stochastic processes,lecture notes for engineering curriculum} (2012).
\newline\urlprefix\url{https://ece.iisc.ac.in/anurag/index.php/books/}

\bibitem{AnuragKumar}
A.~Kumar, E.~Altman, D.~Miorandi, M.~Goyal, New insights from a fixed-point
  analysis of single cell ieee 802.11 {WLAN}s, IEEE/ACM Transactions on
  Networking 15~(3) (2007) 588--601.
\newblock \href {http://dx.doi.org/10.1109/TNET.2007.893091}
  {\path{doi:10.1109/TNET.2007.893091}}.

\bibitem{Bianchi}
G.~Bianchi, Performance analysis of the {IEEE} 802.11 distributed coordination
  function, IEEE Journal on Selected Areas in Communications 18~(3) (2000)
  535--547.
\newblock \href {http://dx.doi.org/10.1109/49.840210}
  {\path{doi:10.1109/49.840210}}.

\end{thebibliography}
\end{document}